\title{Spatially Correlated Content Caching for Device-to-Device Communications}
\author{Derya~Malak, Mazin Al-Shalash, and~Jeffrey~G.~Andrews
\thanks{This work in part appeared in Proc. IEEE Intl. Symposium on Info. Theory, Barcelona, Spain, July
2016 \cite{Malak2016}. }
\thanks{D. Malak and J. G. Andrews are with the Wireless Networking and Communications Group (WNCG), The University of Texas at Austin, Austin, TX 78701 USA (email: deryamalak@utexas.edu; jandrews@ece.utexas.edu). M. Al-Shalash is with Huawei Technologies, Plano, TX 75075 USA (e-mail: mshalash@huawei.com). Last revised: {\today}.}
}
\newtheorem{theo}{Theorem}
\newtheorem{defi}{Definition}
\newtheorem{prop}{Proposition}
\newtheorem{ex}{Example}
\DeclareMathOperator*{\argmax}{arg\,max}
\DeclareMathOperator*{\Rdd}{R_{\sf D2D}}
\DeclareMathOperator*{\Rdds}{R_{\sf D2D}^2}
\DeclareMathOperator*{\PX}{p_{c,\Pi}}
\DeclareMathOperator*{\PI}{p_{c,I}}
\DeclareMathOperator*{\PP}{p_{c,MPC}}
\DeclareMathOperator*{\PG}{p_{c,G}} 
\DeclareMathOperator*{\PGstar}{p_{c,G}^*}
\DeclareMathOperator*{\PGL}{p^{\rm Lin}_{c,G}}
\DeclareMathOperator*{\PMA}{p_{c,\mhcA}}
\DeclareMathOperator*{\PMAstar}{p_{c,\mhcA}^*}
\DeclareMathOperator*{\PhitI}{P_{\sf Hit,I}}
\DeclareMathOperator*{\PhitG}{P_{\sf Hit,G}}
\DeclareMathOperator*{\PhitGstar}{P_{\sf Hit,G}^*}
\DeclareMathOperator*{\PhitMP}{P_{\sf Hit,MPC}}
\DeclareMathOperator*{\PhitMA}{P_{\sf Hit,\mhcA}} 
\DeclareMathOperator*{\PhitMALB}{P_{\sf Hit,\mhcA}^{\sf LB}} 
\DeclareMathOperator*{\PhitMALBstar}{P_{\sf Hit,\mhcA}^{\sf LB*}} 
\DeclareMathOperator*{\PhitMAUB}{P_{\sf Hit,\mhcA}^{\sf UB}} 
\DeclareMathOperator*{\PhitMB}{P_{\sf Hit,\mhcB}} 
\DeclareMathOperator*{\PhitN}{P_{\sf Hit,N}} 
\DeclareMathOperator*{\PhitX}{P_{{\sf Hit},\Pi}} 
\DeclareMathOperator*{\PmissX}{P_{{\rm Miss},\Pi}} 
\DeclareMathOperator*{\PmissI}{P_{\rm Miss,I}} 
\DeclareMathOperator*{\PmissN}{P_{\rm Miss,N}} 
\DeclareMathOperator*{\PmissMP}{P_{\rm Miss,MPC}} 
\DeclareMathOperator*{\PmissMA}{P_{\rm Miss,MA}} 
\DeclareMathOperator*{\mc}{m_c}
\DeclareMathOperator*{\BM}{\sf BM}
\DeclareMathOperator*{\PPP}{\sf PPP}
\DeclareMathOperator*{\GPP}{\sf GPP}
\DeclareMathOperator*{\GCP}{\sf GCP}
\DeclareMathOperator*{\MPC}{\sf MPC}
\DeclareMathOperator*{\HCP}{\sf HCP}
\DeclareMathOperator*{\mhc}{\sf MHC}
\DeclareMathOperator*{\mhcA}{{\sf HCP}-A}
\DeclareMathOperator*{\mhcB}{{\sf HCP}-B}
\DeclareMathOperator*{\DD}{\sf D2D}
\DeclareMathOperator*{\LRU}{\sf LRU}
\DeclareMathOperator*{\Poisson}{\sf Poisson}
\DeclareMathOperator*{\RHS}{\sf RHS}
\DeclareMathOperator*{\tx}{\lambda_t}
\DeclareMathOperator*{\rx}{\lambda_r}
\DeclareMathOperator*{\txMA}{\lambda_{\mhcA}}
\DeclareMathOperator*{\txMAs}{\lambda_{\mhcA}^2}
\DeclareMathOperator*{\txMAstar}{\lambda_{\mhcA}^*}
\begin{document}
\maketitle
\begin{abstract}
We study optimal geographic content placement for device-to-device ($\DD$) networks in which each file's popularity follows the Zipf distribution. The locations of the $\DD$ users (caches) are modeled by a Poisson point process ($\PPP$) and have limited communication range and finite storage. Inspired by the Mat\'{e}rn hard-core (type II) point process that captures pairwise interactions between nodes, we devise a novel spatially correlated caching strategy called {\em hard-core placement} ($\HCP$) such that the $\DD$ nodes caching the same file are never closer to each other than the {\em exclusion radius}. The exclusion radius plays the role of a substitute for caching probability. We derive and optimize the exclusion radii to maximize the \emph{hit probability}, which is the probability that a given $\DD$ node can find a desired file at another node's cache within its communication range. Contrasting it with independent content placement, which is used in most prior work, 
our $\HCP$ strategy often yields a significantly higher cache hit probability. We further demonstrate that the $\HCP$ strategy is effective for small cache sizes and a small communication radius, which are likely conditions for $\DD$.
\end{abstract}


\maketitle

\section{Introduction}
\label{intro} 
$\DD$ communication is a promising technique for enabling proximity-based applications and increased offloading from the heavily loaded cellular network, and is being actively standardized by 3GPP \cite{LinMag2014}. The efficacy of $\DD$ caching networks relies on users possessing content that a nearby user wants. Therefore, intelligent caching of popular files is critical for $\DD$ to be successful. Caching has been shown to provide increased spectral reuse and throughput gain in $\DD$-enabled networks \cite{Naderializadeh2014}, and the optimal way to cache content is studied from different perspectives, e.g. using probabilistic placement \cite{Blaszczyszyn2014}, maximizing cache-aided spatial throughput \cite{ChenPapKoun2016}, but several aspects of optimal caching exploiting spatial correlations for network settings have not been explored. Intuitively, given a finite amount of storage at each node, popular content should be seeded into the network in a way that maximizes the \emph{hit probability} that a given $\DD$ device can find a desired file -- selected at random according to a request distribution -- within its radio range. We explore this problem quantitatively in this paper by considering different spatial content models and deriving, optimizing and comparing the hit probabilities for each of them.

Content caching has received significant attention as a means of improving the throughput and latency of networks without requiring additional bandwidth or other technological improvements. A practical use case is video, which will consume nearly $80\%$ of all wireless data by 2021 \cite{Cisco2017}. 
Video caching is perfectly suited to $\DD$ networks for offloading traffic from 
cellular networks.

\subsection{Related Work and Motivation}\label{relatedwork}
Research to date on content caching has been mainly focused on two different perspectives. On one hand, researchers have attempted to understand the fundamental limits of caching gain. The gain offered by local caching and broadcasting is characterized in the landmark paper \cite{MaddahAli2013Journal}. Although this work does not deal with $\DD$ communications and the caches cannot cooperate, it provides the first attempt to characterize the gain offered by local caching. Scaling of 
the number of active $\DD$ links and optimal collaboration distance with $\DD$ caching are studied in \cite{Ji2014}, \cite{Golrezaei2014}. Combining random independent caching with short-range $\DD$ communications can significantly improve the throughput \cite{JiCaiMol2015}. Capacity scaling laws in wireless ad hoc networks are investigated in \cite{Gupta2000TIT}, featuring short link distances, and cooperative schemes for order optimal throughput scaling is proposed in \cite{Niesen2009}. Capacity scaling laws for single \cite{JiCaiMol2015}, \cite{MaddahAli2013Journal} and multi-hop caching networks \cite{Jeon2015} are also investigated. Physical layer caching is studied in \cite{NadMadAve2016} to mitigate the interference, and in \cite{LiuLau2016tnet} to achieve linear capacity scaling. Finite-length analysis of random caching schemes that achieve multiplicative caching gain is presented in \cite{Shanmugam2016TIT}, \cite{Vettigli2015}.

Alternatively, as in the current paper, there are several studies focusing on decentralized caching algorithms that have optimized the caching distribution to maximize the cache hit probability, using deterministic or random caching as in \cite{Golrezaei2014TWC}, \cite{Ji2014} given a base station (BS)-user topology. FemtoCaching replaces backhaul capacity with storage capacity at the small cell access points, i.e., helpers, and the optimum way of assigning files to the helpers is analyzed in \cite{Shanmugam2013} to minimize the delay. There are also geographic placement models focusing on finding the cache locally such as \cite{Blaszczyszyn2014}, in which the cache hit probability is maximized for SINR, Boolean and overlaid network coverage models, and \cite{Malak2016_D2DCaching}, in which the density of successful receptions is maximized using probabilistic placement. Although most of these strategies suggest that the caching distribution should be skewed towards the most popular content and exploit the diversity of content, and it is not usually optimal to cache just the most popular files, as pointed out in \cite{Ji2014}, \cite{Golrezaei2014}. Further, as the current paper will show, unlike the probabilistic policies, where the files are independently placed in the cache memories of different nodes according to the same distribution \cite{Blaszczyszyn2014}, \cite{EliBartek2017}, and \cite{Malak2016_D2DCaching}; it is not usually optimal to cache files independently. 
For larger transmission range and higher network density, we will quantify and see that the hit-maximizing caching strategy can be increasingly skewed away from independently caching the 
popular files. 

Recent studies also address problems at the intersection of the hit probability and the spatial throughput. The spatial throughput in $\DD$ networks is optimized by suitably adjusting the proportion of active devices in \cite{KeeBlaMuh2016}. Exploiting stochastic geometry, a Poisson cluster model is proposed in \cite{Afshang2016} and the area spectral efficiency is maximized assuming that the desired content is available inside the same cluster as the typical device. Some of the existing work focuses on mitigating excessive interference to maximize the throughput or capacity, as in \cite{LiuLau2016tnet}, \cite{Naderializadeh2014}, \cite{NadMadAve2016}. Employing probabilistic caching, cache-aided throughput, which measures the density of successfully served requests by local device caches, is investigated in \cite{ChenPapKoun2016}. The optimal caching probabilities obtained by cache-aided throughput optimization provide throughput gain, particularly in dense user environments compared with the cache-hit-optimal case.

Challenges for the adoption of caching for wireless access networks also include making timely estimates of varying content popularity  \cite{Leconte2016}. Cache update algorithms exploiting the temporal locality of the content have been well studied \cite{Che2002}. Inspired from the Least Recently Used ($\LRU$) replacement principle, a multi-coverage caching policy at the edge-nodes is proposed in \cite{Giovanidis2016}, where caches are updated in a way that provides content diversity to users who are covered by more than one node. Although \cite{Giovanidis2016} combines the temporal and spatial aspects of caching and approaches the performance of centralized policies, it is restricted to the $\LRU$ principle.  


\subsection{Contributions and A High Level Summary}
\label{contributions}
We consider a spatial $\DD$ network setting in which the $\DD$ user locations are modeled by a Poisson point process ($\PPP$), and users have limited communication range and finite storage. The $\DD$ users are served by each other if  the desired content is cached at a user within its radio range: this is called a \emph{hit}.  Otherwise, they are served by the cellular network base station, which is what $\DD$ communication aims to avoid. 

We concentrate exclusively on the content placement phase in the above 
setting in order to maximize the cache hit probability via exploiting the spatial diversity. We do not focus on the 
transmission phase that incorporates the path loss, fading or interference. The coverage process of the proposed scheme is represented by a Boolean model (BM). The BM is tractable for the noise-limited regime \cite{Blaszczyszyn2014}, where the interference is small compared to the noise. The coverage area of the BM is determined by a \emph{fixed communication radius}, as will be detailed in Sect. \ref{hitprobability}. 

{\bf Spatial caching, pairwise interactions and Mat\'{e}rn hard-core-inspired placement.} We introduce a spatial content distribution model for a $\DD$ network, and describe the cache hit probability maximization problem in Sect. \ref{hitprobability}. Our aim is to extend the independent content placement strategy, also known as geographic content placement ($\GCP$) \cite{Blaszczyszyn2014}, where there is no spatial correlation in placement, which we discuss in Sect. \ref{cachemodel-independent}. Exploiting the Mat\'{e}rn hard-core ($\mhc$) models, we propose novel spatially correlated cache placement strategies that enable spatial diversity to maximize the $\DD$ cache hit probability. In Sect. \ref{hardcore}, we detail the $\mhc$ placement and analyze two different $\mhc$ placement strategies: (i) $\mhcA$ that can provide a significantly higher cache hit probability than the $\GCP$ scheme in the small cache size regime and (ii) $\mhcB$ that has a higher hit probability than $\GCP$ for short ranges. 

{\bf The key differences from the independent placement model.} The device locations follow the $\PPP$ distribution, which provides a random deployment instead of a fixed pattern, and hence it is possible to have cache clusters and isolated caches \cite{Andrews2011}, and the content placement distribution is optimized accordingly. Unlike the independent placement model, where the cache placement distribution is independent and identically distributed (i.i.d.) over the spatial domain, the $\mhc$ model captures the pairwise interactions between the $\DD$ nodes and yields a negatively correlated placement. The caches storing a particular file are never closer to each other than some given distance, called the exclusion radius, meaning that neighboring users are not likely to cache redundant content. Hence, the radius of exclusion plays the role of a substitute for caching probability.

{\bf Comparisons and design insights.} Sect. \ref{comparisonindependentmhc} provides a simulation study to compare the performance between the different content placement strategies. Independent content placement does not exploit $\DD$ interactions at the network level, and our results show that geographic placement should exploit locality of content, which is possible through negatively correlated placement. For short range communication and small cache sizes, $\HCP$ is preferred, and when the network intensity is fixed, the cache hit rate gain of the $\HCP$ model over the $\GCP$ and caching most popular content schemes can reach up to $37\%$ and $50\%$, respectively when the communication range is improved, as demonstrated in Sect. \ref{comparisonindependentmhc}. 

\section{System Model and Problem Formulation}
\label{hitprobability}
The locations of the $\DD$ users are modeled by a $\PPP$ $\Phi$ with density $\tx$ as in \cite{Lin2013}. We assume that there are $M$ total files in the network, where all files have the same size, and each user has the same cache size $N<M$. Depending on its cache state, each user makes requests for new files based on a general popularity distribution over the set of the files. The popularity of such requests is modeled by the Zipf distribution, which has probability mass function (pmf) $p_r(n)=\frac{1}{n^{\gamma_r}}/\sum_{m=1}^M{\frac{1}{m^{\gamma_r}}}$, for $n=1,\hdots, M$, where $\gamma_r$ is the Zipf exponent that determines the skewness of the distribution. The demand profile is Independent Reference Model (IRM), i.e., the standard synthetic traffic model in which the request distribution does not change over time \cite{Traverso2013}. Our objective is to maximize the average cache hit probability performance of the proposed caching model. Therefore, it is sufficient to consider a snapshot of the network\footnote{Extension of the model to also incorporate the temporal correlation of real traffic traces can be done by exploiting models like the Shot-Noise Model (SNM). This overcomes the limitations of the IRM by explicitly accounting for the temporal locality in requests for contents \cite{Traverso2013}. However, in that case, the problem under study will have an additional dimension to optimize over, and to do so, online learning algorithms should be developed to both learn the demand and optimize the spatial placement. The study of the temporal dynamics of the request distribution and the content transmission phase is left as future work.}, in which the $\DD$ user realization is given and requests are i.i.d. over the space. We devise a spatially correlated probabilistic placement policy, in which the $\DD$ caches are loaded in a distributed manner via additional marks attached to them without accounting for any cost, in a timescale that is much shorter than the time over which the 
locations are predicted, as will be detailed in Sect. \ref{hardcore}.
  
Consider a given realization $\phi=\{x_i\}\subset \mathbb{R}^2$ of the $\PPP$ $\Phi$. The coverage process of the proposed model can be represented by a Boolean model (BM) \cite[Ch. 3]{BaccelliBook1}. Specifically, given a transmit power $P$, if we only consider path loss (with exponent $\alpha$), no fading and no interference, the received signal at the boundary should be larger than a threshold to guarantee coverage, i.e., $Pr^{-\alpha}\geq T$, yielding $r\leq \Rdd=(P/T)^{1/\alpha}$. 
Hence, $\DD$ users can only communicate within a finite range, which we call the $\DD$ radius, denoted by $\Rdd$. A file request is fulfilled by the $\DD$ users within $\Rdd$ if one has the file; else the $\DD$ user is served by a BS.

The BM is driven by 
a $\PPP$ on $\mathbb{R}^2$ in which the marks are deterministic (constant), $\tilde{\Phi}=\sum\nolimits_{i}{\delta_{(x_i,B_i(\Rdd))}}$, whose points $x_i$'s denote the germs, and on disc-shaped grains $B_i(\Rdd)$ -- a closed ball of fixed radius $\Rdd$ centered at $x_i$ -- that model the coverage regions of germs. The $\BM$ is a tractable model for the noise-limited regime \cite{Blaszczyszyn2014}. The coverage process of the $\DD$ transmitters driven by the $\BM$ is given by the union $V_{\rm BM}=\bigcup_i{(x_i+B_0(\Rdd))}$ \cite[Ch. 3]{BaccelliBook1}. {For the interference-limited regime, there is no notion of communication radius, and the analysis of the coverage becomes more involved. SINR coverage models as in \cite{Blaszczyszyn2014} can be exploited to determine the distribution of the coverage number, i.e., the number of $\DD$ users covering the typical receiver. However, this is beyond the  scope of the current paper.} 

To characterize the successful transmission probability, one needs to know the number of users that a typical node can connect to, i.e., the coverage number. Exploiting the properties of the $\PPP$, the distribution of the number of transmitters covering the typical receiver is given by $\mathcal{N}_P \sim\Poisson(\tx \pi {\Rdds})$. Therefore, 
\begin{align}
\mathbb{P}(\mathcal{N}_P=k)=e^{-\tx\pi{\Rdds}}\frac{(\tx\pi{\Rdds})^k}{k!},\quad k\geq 0.
\end{align}

\subsection{Cache Hit Probability}
\label{cachehitprobability}
Assume that the cache placement at the $\DD$ users is done in a dependent manner. Given $\mathcal{N}_P=k$ transmitters cover the typical receiver, let $Y_{(m,i)}$ be the indicator random variable that takes the value $1$ if file $m$ is available in the cache located at $x_i\in\phi$ and $0$ otherwise. Thus, the caching probability of file $m$ in cache $i$  is given by $\PX(m,x_i)=\mathbb{P}(Y_{(m,i)}=1)$. Optimal content placement is a binary problem where the cache placement constraint $\sum\nolimits_{m=1}^M{Y_{(m,i)}}\leq N$ is satisfied for all $x_i\in\phi$, i.e., $Y_{(m,i)}$'s are inherently dependent. However, the original problem is combinatorial and is NP-hard \cite{Shanmugam2013}. For tractability reasons, we take the expectation of this relation and obtain our relaxed cache placement constraint: $\sum\nolimits_{m=1}^M{\PX(m,x_i)}\leq N$. Later, we show there are feasible solutions to the relaxed problem filling up all the cache slots.

The maximum average total cache hit probability, i.e., the probability that the typical user finds the content in one of the $\DD$ users it is covered by, for a content placement strategy $\Pi$ can be evaluated by solving the following optimization formulation:
\begin{eqnarray}
\begin{aligned}
\label{eq:hitprobX}
\max_{\PX} &\,\,\, \PhitX\\
\textrm{s.t.}
& \quad \sum\limits_{m=1}^M{\PX(m,x_i)}\leq N,\quad x_i\in \Phi,
\end{aligned}
\end{eqnarray}
where the hit probability is given by the following expression:
\begin{align}
\label{PhitX}
\PhitX=1-\sum\limits_{m=1}^M{p_r(m)\sum\limits_{k=0}^{\infty}{\mathbb{P}(\mathcal{N}_{\Pi}=k)\PmissX(m,k)}}, 
\end{align}
where $\mathbb{P}(\mathcal{N}_{\Pi}=k)$ is the probability that $k$ transmitters (caches) cover the typical receiver, and $\PmissX(m,k)$ is the probability that $k$ caches cover a receiver, and none has file $m$.  

We propose different strategies to serve the $\DD$ requests that maximize the cache hit probability. Assuming a transmitter receives one request at a time and multiple transmitters can potentially serve a request, the selection of an active transmitter depends on the caching strategy. A summary of the symbol definitions and important network parameters are given in Table \ref{table:tab1}.

\subsection{Repulsive Content Placement Design}
\label{negdepplace}
Optimizing the marginal distribution for content caching by decoupling the caches of $\DD$ users in a spatial network scenario is not sufficient to optimize the joint performance of the caching. The performance can be improved by developing spatially correlated content placement strategies that exploit the spatial distribution of the $\DD$ nodes, as we propose in this paper. 

Negatively correlated spatial placement corresponds to a distance-dependent thinning of the transmitter process so that neighboring users are less likely to have matching contents. This kind of approach is promising from an average cache hit rate optimization perspective. Therefore, we mainly focus on negatively dependent or repulsive content placement strategies. 

We next define negative dependence for a collection of random variables.
\begin{defi}
\label{NegativeDependence}
Random variables $Y_1, \hdots, Y_k$, $k\geq 2$, are said to be negatively dependent, if for any numbers $y_1,\hdots, y_k\in\mathbb{R}$, we have that \cite{Gerasimov2012}
\begin{align}
\mathbb{P}\Big(\bigcap\nolimits_{i=1}^k {Y_i\leq y_i}\Big)\leq \prod\nolimits_{i=1}^k{\mathbb{P}(Y_i\leq y_i)},\quad
\mathbb{P}\Big(\bigcap\nolimits_{i=1}^k {Y_i> y_i}\Big)\leq \prod\nolimits_{i=1}^k{\mathbb{P}(Y_i> y_i)}.\nonumber
\end{align}
\end{defi}

Next, in Prop. \ref{neg-placement}, we state the benefit of negatively correlated placement, which is the basis of future spatially correlated policies including our proposed policy in the current paper. 

\begin{prop} 
\label{neg-placement}
Negatively dependent content placement provides a higher average cache hit probability than the independent placement strategies.
\end{prop}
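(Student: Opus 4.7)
The plan is to compare the conditional miss probability $\PmissX(m,k)$ appearing inside the sum in \eqref{PhitX} term-by-term across the two strategies, and then use monotonicity to transfer the inequality to $\PhitX$. Conditioning on the underlying $\PPP$ $\Phi$ and on the event that $\mathcal{N}_{\Pi}=k$ caches located at some $x_1,\dots,x_k$ cover the typical receiver, the miss probability for file $m$ is $\mathbb{P}\bigl(\bigcap_{i=1}^k \{Y_{(m,i)}=0\}\bigr)$, regardless of whether the placement is independent or negatively dependent.

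First, I would hold the site-wise marginals $\PX(m,x_i)=\mathbb{P}(Y_{(m,i)}=1)$ fixed and identical across the two strategies being compared, which is the natural and fair comparison since these marginals are precisely what the relaxed cache-size constraint in \eqref{eq:hitprobX} controls. Under an independent placement the miss probability factorizes exactly as $\PmissI(m,k)=\prod_{i=1}^k (1-\PI(m,x_i))$. Under a negatively dependent placement with the same marginals, each $Y_{(m,i)}$ is $\{0,1\}$-valued so $\{Y_{(m,i)}=0\}=\{Y_{(m,i)}\leq 0\}$, and applying the first inequality of Definition \ref{NegativeDependence} with thresholds $y_i=0$ gives $\PmissX(m,k)\leq \prod_{i=1}^k (1-\PX(m,x_i))=\PmissI(m,k)$.

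Plugging this pointwise inequality into \eqref{PhitX} and using that $p_r(m)\geq 0$ and $\mathbb{P}(\mathcal{N}_{\Pi}=k)\geq 0$, the inner double sum can only shrink, so $\PhitX$ can only grow. Hence the negatively dependent strategy dominates the matched independent one, and optimizing over the richer class of negatively dependent placements yields a hit probability no smaller than optimizing over independent placements.

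The main obstacle I anticipate is a bookkeeping one rather than a conceptual one: the coverage count $\mathcal{N}_{\Pi}$ and the identities of the covering points $x_1,\dots,x_k$ are random functionals of $\Phi$, and the argument requires first conditioning on $\Phi$ and then averaging, which should be clean under the Palm interpretation of the typical receiver. A second subtle point is realizability --- one must verify that, for a given target set of marginals $\{\PX(m,x_i)\}$ satisfying the relaxed constraint, there exists at least one negatively dependent placement realizing those marginals, so that the comparison is not vacuous; the $\HCP$ construction developed in Sect. \ref{hardcore} plays exactly this role.
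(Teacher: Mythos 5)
Your proposal is correct and follows essentially the same route as the paper's own proof: apply Definition \ref{NegativeDependence} with thresholds $y_i=0$ to bound $\PmissX(m,k)$ by the product of the marginal miss probabilities, and then push the term-by-term inequality through the expression \eqref{PhitX} using nonnegativity of $p_r(m)$ and $\mathbb{P}(\mathcal{N}_{\Pi}=k)$. Your added remarks on fixing identical marginals and on realizability (via the $\HCP$ construction) are consistent with, and slightly more careful than, the paper's argument, which assumes identical placement and defers realizability to Sect. \ref{hardcore}.
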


\begin{proof}
See Appendix \ref{App:Appendix-neg-placement}.
\end{proof}

In the remainder of this paper, we first discuss the independent content placement model in Sect. \ref{cachemodel-independent}, which is a special case of the geographic content placement ($\GCP$) problem using the Boolean model first proposed in \cite{Blaszczyszyn2014}. 

We then ask the following question: Given the coverage number $k$ and file $m$, how large cache hit rates can we achieve, i.e., how small can $\PmissN(m,k)\leq \mathbb{P}(Y_m=0)^k$ get for a spatial content placement setting, or what is the best negatively dependent content placement strategy? To answer that, we consider a negatively dependent content placement strategy inspired from the Mat\'{e}rn hard-core processes $\mhc$ (type II), which we call as the hard-core content placement ($\HCP$). We detail the $\HCP$ model in Sect. \ref{hardcore}.

\begin{table}[t!]\small
\begin{center}
\setlength{\extrarowheight}{3pt}
\begin{tabular}{l | c }
{\bf Symbol} & {\bf Definition} \\ 
\hline
{\bf General System Model Parameters} &\\
Baseline $\PPP$ with transmitter density $\tx$; a realization of the $\PPP$ & $\Phi$; $\phi=\{x_i\}\subset \mathbb{R}^2$ \\
$\DD$ communication radius; closed ball centered at $x_i$ with radius $\Rdd$ & $\Rdd$; $B_i(\Rdd)$\\
The coverage process of the $\DD$ transmitters driven by the $\BM$ & $V_{\rm BM}=\bigcup_i{(x_i+B_0(\Rdd))}$\\
File request distribution; Zipf request exponent & $p_r(\cdot)\sim \rm{Zipf}(\gamma_r)$; $\gamma_r$\\
Caching probability of file $m$ in cache $i$ & $\PX(m,x_i)$\\
Density of receivers; density of $\DD$ users & $\rx$; $\tx$\\
Number of $\DD$ users covering a receiver under strategy $\Pi$ & $\mathcal{N}_{\Pi}$\\
Hit probability for placement strategy $\Pi$ & $\PhitX$\\
Miss probability of file $m$ given $k$ users cover the \\
\hspace{0.3cm} typical receiver for placement strategy $\Pi$ & $\PmissX(m,k)$\\
Total number of files; cache size & $M; N<M$\\ 
\hline
{\bf Independent Content Placement Design} & \\
The caching distribution for independent placement & $\PI(m)$\\
The caching distribution for geographic content placement ($\GCP$) 
in \cite{Blaszczyszyn2014} & $\PG(m)$\\
The caching distribution for caching most popular content ($\MPC$)  & $\PP(m)=1_{m\leq N}$\\
\hline 
{\bf Hard-Core Content Placement ($\HCP$) Design} & \\
$\mhcA$ model constructed from the underlying $\PPP$ $\Phi$ & $\Phi_M$\\
Exclusion radius of file $m$ for the $\mhcA$ model & $r_m$\\
The density of the $\mhcA$ model for file $m$ & $\txMA(m)$\\
The number of neighboring transmitters in $B_0(r_m)$ & $C_m\sim\Poisson(\bar{C}_m)$, $\bar{C}_m=\tx\pi r_m^2$\\
The number of transmitters containing file $m$ in $B_0(\Rdd)$ & $\tilde{C}_m$\\
$2k$ dimensional bounded region $[0,D]^{2k}$ & $\mathcal{D}^k = [0,D]^{2k}$ \\
The cache miss region given there exists $k$ nodes & $\mathcal{V}^k= [0,D]^{2k}\backslash [0,\Rdd]^{2k}$\\
Second-order product density for file $m$ & $\rho^{(2)}_m(r)$ \\
\hline
\end{tabular}
\end{center}
\caption{Notation.}
\label{table:tab1}
\end{table}

\section{Independent Content Placement Design}
\label{cachemodel-independent}
Independent cache placement design is the baseline model where the files are cached at the $\DD$ users identically and independently of each other. Let $\PI(m)=p_c(m,x_i)=\mathbb{P}(Y_m=1)$ be the caching probability of file $m$ in any cache, which is the same at all points $x_i\in\phi$. 

The maximum average total cache hit probability, i.e., the probability that the typical user finds the content in one of the $\DD$ users it is covered by, can be evaluated by solving 
\begin{eqnarray}
\begin{aligned}
\label{eq:hitprob-opt}
\max_{\PI} &\,\,\, \PhitI\\
\textrm{s.t.}
& \quad \sum\limits_{m=1}^M{\PI(m)}\leq N,
\end{aligned}
\end{eqnarray}
and $\PmissI(m,k)=(1-\PI(m))^k$, which is related to $\PhitI$ through the $\PhitX$ expression in (\ref{PhitX}). 

First, we consider the following trivial case of independent placement, which is clearly suboptimal.
\begin{prop}\label{cachingmostpopular}
{\bf Caching most popular content $\MPC$.} 
The baseline solution is to store the most popular files only. Letting $Y_m=1_{m\leq N}$, i.e., $\PP(m)=1_{m\leq N}$, the miss probability is $\PmissMP(m,k)=1_{N<m\leq M}$ for all $m$ when $k\geq 1$, and $\PmissMP(m,k)=1$ when $k=0$. Hence, the average cache hit probability for the $\MPC$ scheme is $\PhitMP=\mathbb{P}(\mathcal{N}_{\rm MPC}\geq 1)\sum\nolimits_{m=1}^N{p_r(m)}$.
\end{prop}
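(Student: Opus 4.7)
The plan is to substitute the deterministic placement directly into the general hit-probability expression (\ref{PhitX}). First I would observe that under $\MPC$ every cache deterministically stores the file set $\{1,\ldots,N\}$, so the caching indicator $Y_{(m,i)}=1_{m\leq N}$ is the same constant for every $x_i\in\phi$ and is independent of the spatial realization. This immediately yields $\PP(m)=1_{m\leq N}$ and saturates the relaxed constraint in (\ref{eq:hitprobX}) since $\sum_{m=1}^M \PP(m)=N$.

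Next I would compute $\PmissMP(m,k)$, the probability that none of the $k$ caches covering the typical receiver holds file $m$. Because each covering cache either certainly contains $m$ (when $m\leq N$) or certainly does not (when $m>N$), the spatial configuration plays no role once we condition on $k$. For $k=0$ the receiver has no neighbor at all, so every file is missed and $\PmissMP(m,0)=1$. For $k\geq 1$, all neighbors store $m$ whenever $m\leq N$, giving miss probability $0$, and no neighbor stores $m$ whenever $N<m\leq M$, giving miss probability $1$; combining these produces $\PmissMP(m,k)=1_{N<m\leq M}$ as claimed.

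Finally I would plug these values into (\ref{PhitX}) and split the outer sum over $m\in\{1,\ldots,M\}$ into the ranges $m\leq N$ and $m>N$. The popular range contributes $\mathbb{P}(\mathcal{N}_{\rm MPC}=0)\sum_{m=1}^N p_r(m)$ since a miss can only occur when $k=0$, while the unpopular range contributes $\sum_{m=N+1}^M p_r(m)$ since every $k$ yields a miss; using $\sum_{m=1}^M p_r(m)=1$ collapses the expression to $\PhitMP=(1-\mathbb{P}(\mathcal{N}_{\rm MPC}=0))\sum_{m=1}^N p_r(m)=\mathbb{P}(\mathcal{N}_{\rm MPC}\geq 1)\sum_{m=1}^N p_r(m)$. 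I do not expect any genuine obstacle here: because $\MPC$ performs no thinning of $\Phi$, the coverage count $\mathcal{N}_{\rm MPC}$ coincides with the Poisson coverage count of the underlying $\PPP$, and the only substantive check is that the outer index splitting leaves no cross terms, which follows directly from the indicator form of $\PmissMP(m,k)$.
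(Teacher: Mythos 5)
Your proposal is correct and follows exactly the derivation the paper intends: substituting $\PmissMP(m,k)=1_{N<m\leq M}$ (for $k\geq 1$) and $\PmissMP(m,0)=1$ into the general expression (\ref{PhitX}), splitting the sum over $m\leq N$ and $m>N$, and using $\sum_{m=1}^M p_r(m)=1$ together with the fact that $\MPC$ does no thinning so $\mathcal{N}_{\rm MPC}$ is the Poisson coverage count of the underlying $\PPP$. The paper states this proposition without a separate appendix proof precisely because the computation is this direct, and your argument fills it in correctly.
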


The independent cache design problem in our paper is a special case of the geographic content placement ($\GCP$) problem using the Boolean model as proposed in \cite{Blaszczyszyn2014}. The optimal solution of the $\GCP$ problem \cite{Blaszczyszyn2014} is characterized by Theorem \ref{GCP}.

\begin{theo}\label{GCP}
{\bf{Geographic Content Placement ($\GCP$) \cite[Theorem 1]{Blaszczyszyn2014}}}. The optimal caching distribution for the independent placement strategy is given as follows
\begin{eqnarray}
\label{pcoptimal}
\PGstar(m)=\begin{cases}
1,\quad \mu^*< p_r(m)\mathbb{P}(\mathcal{N}_P=1) \\
\frac{1}{\tx \pi {\Rdds}}\log\Big(\frac{p_r(m)\tx \pi {\Rdds}}{\mu^*}\Big),\quad p_r(m)\mathbb{P}(\mathcal{N}_P=1)\leq\mu^*\leq p_r(m)\mathbb{E}[\mathcal{N}_P]\\
0,\quad \mu^*> p_r(m)\mathbb{E}[\mathcal{N}_P]
\end{cases},
\end{eqnarray}
where $\mathbb{P}(\mathcal{N}_P=1)=e^{-\tx \pi {\Rdds}}(\tx \pi {\Rdds})$, $\mathbb{E}[\mathcal{N}_P]=\tx \pi\Rdds$. The placement probabilities satisfy 
\begin{align}
p_r(j)\sum\limits_{m=1}^M{\mathbb{P}(\mathcal{N}_P=m)m(1-\PGstar(j))^{m-1}}=\mu^*,\quad j\in\{1,\hdots,M\}. 
\end{align}
The optimal variable $\mu^*$ satisfies the equality $\sum\nolimits_{m=1}^M{\PGstar(m)}=N$.

Thus, the optimal value of the average cache hit probability for the $\GCP$ model is given by
\begin{align}
\label{IndependentHitProbability}
\PhitG=\sum\limits_{m=1}^M{p_r(m)[1-\exp{(-\tx \PGstar(m)\pi {\Rdds})}]}.
\end{align}
\end{theo}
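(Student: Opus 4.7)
The plan is to reduce problem (\ref{eq:hitprob-opt}) to a finite-dimensional separable concave maximization over the box $[0,1]^M$ with one linear budget constraint, and then read the three cases off of the KKT conditions. Under independent placement the per-file miss probability given $k$ covering caches is $\PmissI(m,k) = (1-\PI(m))^k$. Averaging over the Poisson coverage number via the probability generating functional identity $\mathbb{E}[(1-p)^{\mathcal{N}_P}] = e^{-\Lambda p}$ with $\Lambda = \tx\pi\Rdds$, the objective in (\ref{PhitX}) collapses to
\[
\PhitI \;=\; \sum_{m=1}^M p_r(m)\bigl[1 - e^{-\Lambda \PI(m)}\bigr],
\]
which is exactly (\ref{IndependentHitProbability}) and is a separable sum of strictly concave increasing functions of $\PI(m)\in[0,1]$. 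Together with the linear budget $\sum_m \PI(m)\le N$ and the box constraints, the feasible set has nonempty relative interior (e.g.\ $\PI(m) = N/M$), so Slater's condition holds and the KKT conditions are both necessary and sufficient.

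Next I would introduce a multiplier $\mu\ge 0$ for the budget and $\alpha_m,\beta_m\ge 0$ for the lower and upper box constraints, giving the stationarity equation $p_r(m)\Lambda\, e^{-\Lambda \PI(m)} = \mu - \alpha_m + \beta_m$. For an interior value $\PGstar(m)\in(0,1)$ both $\alpha_m=\beta_m=0$, yielding $\PGstar(m) = \frac{1}{\Lambda}\log\bigl(p_r(m)\Lambda/\mu\bigr)$, which lies in $[0,1]$ precisely when $p_r(m)\Lambda e^{-\Lambda} \le \mu \le p_r(m)\Lambda$; using $\mathbb{P}(\mathcal{N}_P=1)=\Lambda e^{-\Lambda}$ and $\mathbb{E}[\mathcal{N}_P]=\Lambda$ this is exactly the middle branch of (\ref{pcoptimal}). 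If $\mu > p_r(m)\mathbb{E}[\mathcal{N}_P]$, the marginal gain at $\PI(m)=0$ already falls below the shadow price, so complementary slackness forces $\PGstar(m)=0$; if $\mu < p_r(m)\mathbb{P}(\mathcal{N}_P=1)$, the marginal gain at $\PI(m)=1$ still exceeds the shadow price, forcing $\PGstar(m)=1$. Rewriting the interior condition with the Poisson identity $\Lambda e^{-\Lambda p} = \mathbb{E}\bigl[\mathcal{N}_P(1-p)^{\mathcal{N}_P-1}\bigr] = \sum_k\mathbb{P}(\mathcal{N}_P=k)\,k\,(1-p)^{k-1}$ recovers the displayed equation $p_r(j)\sum_m\mathbb{P}(\mathcal{N}_P=m)m(1-\PGstar(j))^{m-1}=\mu^*$ in the theorem.

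Finally I would argue that the budget binds when $N<M$: since $\PhitI$ is strictly increasing in every $\PI(m)$, any slack could be spent to raise some $\PGstar(m)<1$ and strictly improve the objective, so $\mu^*$ is determined implicitly by $\sum_{m=1}^M \PGstar(m;\mu^*) = N$. This equation has a unique solution because each $\PGstar(m;\mu)$ defined by (\ref{pcoptimal}) is continuous and non-increasing in $\mu$, and strictly decreasing on the interval where it takes interior values, so the aggregate is strictly decreasing from $M$ to $0$ as $\mu$ ranges from below $\min_m p_r(m)\mathbb{P}(\mathcal{N}_P=1)$ to above $\max_m p_r(m)\mathbb{E}[\mathcal{N}_P]$, and hits the value $N$ at a unique $\mu^*$. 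Substituting $\PGstar(\cdot)$ back into the reduced objective gives (\ref{IndependentHitProbability}). The main obstacle is the careful bookkeeping of the three KKT branches and verifying that $\mu^*$ is well defined and uniquely determined by the binding budget; everything else is textbook concave programming with affine constraints.
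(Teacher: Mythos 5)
Your KKT derivation is correct and is essentially the same route the paper takes: the paper's proof simply cites \cite[Theorem 1]{Blaszczyszyn2014} and notes it follows from Lagrangian relaxation (with $\mu^*$ found by bisection), which is exactly the separable concave program plus stationarity/complementary-slackness bookkeeping you spell out, including the reduction $\mathbb{E}[(1-p)^{\mathcal{N}_P}]=e^{-\tx\pi\Rdds p}$ and the binding budget $\sum_m \PGstar(m)=N$.
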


\begin{proof}
See \cite[Theorem 1]{Blaszczyszyn2014}. It follows from the use of the Lagrangian relaxation method. The solution is found numerically using the bisection method.
\end{proof}

Throughout the paper we use the terms independent cache placement and $\GCP$ interchangeably.

\section{Hard-Core Content Placement Design}
\label{hardcore}
We next consider the hard-core regime, which provides useful insights for the development of spatial content placement for the regime relevant to $\DD$ communications. Mat\'{e}rn's hard-core ($\mhc$) model is a spatial point process whose points are never closer to each other than some given distance. The proposed content placement model is slightly different from the $\mhc$ point process model with fixed radius. Instead, for each file type, a circular exclusion region is created around each $\DD$ transmitter such that the exclusion radius is determined by the popularity of the file. This is to prevent all the $\DD$ transmitters located in a circular region from caching a particular file simultaneously. 

We provide two different spatially correlated content placement models both inspired from the Mat\'{e}rn hard-core ($\mhc$) (type II): (i) $\mhcA$ which is an optimized placement model to maximize the average total cache hit probability in (\ref{eq:hitprobX}), and (ii) $\mhcB$ which has the same marginal content placement probability as the $\GCP$ model in \cite{Blaszczyszyn2014}, and is sufficient for achieving a higher cache hit probability than the $\GCP$ model.

\subsection{Hard-Core Placement Model I ($\mhcA$)} 
\label{hardcore1}
We propose a content placement approach to pick a subset of transmitters based on some exclusion by exploiting the spatial properties of $\mhc$ (type II) model, which we call $\mhcA$. This type of $\mhc$ model is constructed from the underlying $\PPP$ $\Phi$ modeling the locations of the $\DD$ user caches by removing certain nodes of $\Phi$ depending on the positions of the neighboring nodes and additional marks attached to those nodes \cite[Ch. 2.1]{BaccelliBook1}. Each transmitter of the $\BM$ $V_{\rm BM}$ is assigned a uniformly (i.i.d.) distributed mark $U[0,1]$. A node $x\in\Phi$ is selected if it has the lowest mark among all the points in $B_x(R)$, given exclusion radius $R$. A realization of the $\mhc$ point process $\Phi_M$ is illustrated in Fig. \ref{Matern}. 

The $\mhcA$ placement model is motivated from the $\mhc$ model and implemented as follows. For each file type, there is a distinct exclusion radius ($r_m$ for file $m$) instead of having a fixed exclusion radius $R$. Given a realization $\phi$ of the underlying $\PPP$ modeling the locations of the transmitters with intensity $\tx$, we sort the file indices in order of decreasing popularity. For given file index $m$ and radius $r_m$, we implement the steps $(a)$-$(d)$ described in Fig. \ref{Matern} to determine the set of selected transmitters to place file $m$. For the same realization $\phi$, we implement this procedure for all files. Once a cache is selected $N$ times, then it is full, and no more file can be placed even if it is selected. The objective is to determine the file radii to optimize the placement.

\begin{defi}
{\bf Configuration probability.} 
The probability density function (pdf) of the $\mhc$ point process $\Phi_M$ with exactly $k$ points in a bounded region $\mathcal{D}=[0,D]^2\in\mathbb{R}^2$ that denotes the set retained caches that contain file $m$ is given by $f: \mathbb{R}^{2k}\to [0,\infty)$ \cite[Ch. 5.5]{Stoyan1996} so that
\begin{align}
\label{ffunctionMHC}
f_m(\varphi)=
\begin{cases}
a_m,\,\, \text{if} \,\, s_{\varphi}(r_m)=0,\\
0,\,\, \text{otherwise.}
\end{cases}
\end{align}
which is also known as the configuration probability, i.e., the probability that the hard-core model $\Phi_M$ takes the realization $\varphi$. In the above, $\varphi=\{x_1, \dots, x_k\}\subset \mathcal{D}$ denotes the set of $k$ points, $a_m$ is a normalizing constant and $s_{\varphi}(r)$ is the number of inter-point distances in $\varphi$ that are equal or less than $r$. This yields a uniform distribution\footnote{The pdf of the retained process (\ref{ffunctionMHC}) is a scaled version of the pdf of the $\PPP$ $\Phi$ in which there is no point within the exclusion range of the typical cache. This yields a uniform distribution of $k$ points in $\mathcal{D}$, i.e., $f(\varphi)=a$, where $a$ is a normalizing constant.} of a subset of $k$ points with inter-point distances at least $r_m$ in $\mathcal{D}$. 
\end{defi}

We optimize the exclusion radii to maximize the total hit probability. The exclusion radius of a particular file $r_m$ depends on the file popularity in the network, transmitter density and the cache size and satisfies $r_m<\Rdd$. Otherwise, once $r_m$ exceeds $\Rdd$, as holes would start to open up in the coverage for that content, the hit probability for file $m$ would suffer. We consider the following cases: (i) if the file is extremely popular, then many transmitters should simultaneously cache the file, yielding a small exclusion radius, and (ii) if the file is not very popular, then fewer (or zero) transmitters would be sufficient for caching the file, yielding a larger exclusion radius. Therefore, intuitively, we expect the exclusion radius to decrease with increasing file popularity. Our analysis also supports this conclusion that the exclusion radius is inversely related to the file popularity, i.e., the most popular files are stored in a high number of caches with higher marginal probabilities unlike the files with low popularity that are stored with lower marginals, with larger exclusion radius. 

By the Slivnyak Theorem, the Palm distribution of the $\PPP$ $\Phi$ seen from its typical point (cache) located at $0$ corresponds to the law of $\Phi \cup \{0\}$ under the original distribution \cite[Ch. 1.4]{BaccelliBook1}. Since the typical node (which is at the origin) of $\Phi$ has $C_m$ neighbors distributed as $C_m\sim\Poisson(\bar{C}_m)$ with $\bar{C}_m=\tx\pi r_m^2$, given the exclusion radius $r_m$ for file $m$ of the $\mhcA$ model, and the file may be placed at most at only one cache within this circular region. Hence, the probability of a typical $\DD$ transmitter to get the minimum mark in its neighborhood to qualify to cache file $m$, equivalently, the caching probability of file $m$ at a typical transmitter is
\begin{eqnarray}
\label{cacheprobmatern}
\PMA(m)=\mathbb{E}\Big[\frac{1}{1+C_m}\Big]=\frac{1-\exp(-\bar{C}_m)}{\bar{C}_m}.
\end{eqnarray}
From (\ref{cacheprobmatern}), we can easily observe that there is a one-to-one relationship between $r_m$ and $\PMA(m)$. The inverse relationship between $r_m$ and $\PMA(m)$ can be seen by taking the following limits:
\begin{align}
\label{rmvspcm}
\lim\limits_{r_m\to 0} \PMA(m)=1,\quad
\lim\limits_{r_m\to \infty} \PMA(m)=0,
\end{align}
which implies that the popular files have small $r_m$, hence are cached more frequently, and unpopular files have larger exclusion radii, and are stored at fewer locations.

We denote the density of the $\mhcA$ model for file $m$ by 
\begin{align}
\label{lambdamhcA}
\txMA(m)= \frac{[1 - \exp(-\bar{C}_m)]}{\pi r_m^2}  = \PMA(m)\tx. 
\end{align}
From (\ref{lambdamhcA}) and (\ref{cacheprobmatern}), we can see that the placement probability of file $m$ is the same as the percentage of nodes that cache the same file.

Let $\tilde{C}_m$ be the number of transmitters containing file $m$ within a circular region of radius $\Rdd$. At most one transmitter is allowed to contain a file within the exclusion radius. Therefore, when $r_m\geq\Rdd$, we have $\tilde{C}_m\in\{0,1\}$, and when $r_m<\Rdd$, we have $\tilde{C}_m\in \{0,1,2,\cdots\}$.

\begin{prop}
\label{MHCnegdep}
The $\mhc$ placement is a negatively dependent placement technique.
\end{prop}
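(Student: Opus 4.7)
The plan is to verify Definition~\ref{NegativeDependence} for the Bernoulli indicators $Y_{(m,i)}=\mathbf{1}\{x_i \text{ caches file } m\}$ produced by the $\mhcA$ rule. Because each $Y_{(m,i)}\in\{0,1\}$, the definition reduces to the two extremal inequalities
\[
\mathbb{P}\Big(\bigcap\nolimits_i\{Y_{(m,i)}=0\}\Big)\leq\prod\nolimits_i\mathbb{P}(Y_{(m,i)}=0), \quad \mathbb{P}\Big(\bigcap\nolimits_i\{Y_{(m,i)}=1\}\Big)\leq\prod\nolimits_i\mathbb{P}(Y_{(m,i)}=1),
\]
with all other choices of $y_i$ automatic. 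I would work under the Palm distribution at the points in question, which by Slivnyak's theorem adds them to the $\PPP$ $\Phi$ while preserving its Poisson statistics, and use the i.i.d.\ uniform marks $\{U_x\}_{x\in\Phi}$ to drive the selection via $Y_{(m,i)}=\mathbf{1}\{U_{x_i}=\min_{x\in\Phi\cap B(x_i,r_m)} U_x\}$.

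The pairwise case ($k=2$) is the workhorse. For $\|x_i-x_j\|\leq r_m$ the hard-core exclusion forces $Y_{(m,i)}Y_{(m,j)}$ to vanish and the inequality is immediate. For $\|x_i-x_j\|>2r_m$ the exclusion balls are disjoint, the competing marks in each come from independent Poisson processes, and the joint probability factorizes exactly. For the intermediate regime $r_m<\|x_i-x_j\|\leq 2r_m$, I would condition on $U_{x_i}=u$, $U_{x_j}=v$ and invoke void probabilities of thinned Poisson processes in the three regions (ball of $x_i$ alone, ball of $x_j$ alone, overlap) to express $\mathbb{P}(Y_{(m,i)}=1,Y_{(m,j)}=1)$ as a double integral in $(u,v)$ whose integrand factorizes as $\exp(-\tx A\,u - \tx A\,v - \tx A_c\max(u,v))$, with $A$ the non-overlap area and $A_c$ the overlap area of the two balls. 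Evaluating the integral and comparing with $\PMA(m)^2$ yields the pairwise bound $\rho^{(2)}_m(r)\leq \txMA(m)^2$, i.e., negative pair correlation of the retained points at every separation.

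To lift the pairwise bound to a general $k$-tuple of indicators I would use a monotone coupling: scanning the marks in increasing order generates the $\mhcA$ selection incrementally, and each already-imposed retention $\{Y_{(m,i_\ell)}=1\}$ only adds further ``competitors-must-lose'' constraints on overlapping balls, so the conditional probability of any additional $\{Y_{(m,i_{\ell+1})}=1\}$ cannot exceed its unconditional value. Iterating this yields the product bound for the all-ones event, and a symmetric argument on void-in-ball events (``no node in the ball carries the minimal mark'') handles the all-zeros side.

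The hardest step is the intermediate-overlap regime in the pairwise analysis: on a frozen point configuration the marks shared by two overlapping exclusion balls can in fact induce a \emph{positive} conditional correlation between $Y_{(m,i)}$ and $Y_{(m,j)}$, so negative dependence surfaces only after averaging against the Poisson law of competing points in the overlap. The inequality $\rho^{(2)}_m(r)\leq \txMA(m)^2$ captures this cancellation quantitatively, and a clean proof either completes the explicit double-integral evaluation sketched above or, as a shortcut, invokes the known negative-association property of Mat\'ern type-II thinnings from the stochastic-geometry literature.
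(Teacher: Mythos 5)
There is a genuine gap: your key pairwise step is false. For the Mat\'ern type-II thinning the bound $\rho^{(2)}_m(r)\leq \txMA(m)^2$ does \emph{not} hold at every separation. Using the paper's expression (\ref{SOPD}) and letting $\tx\to\infty$, one gets $\rho^{(2)}_m(r)\to 2/(\pi r_m^2\,V_{r_m}(r))$ and $\txMA(m)\to 1/(\pi r_m^2)$, so the pair correlation at $r\downarrow r_m$ tends to $2\pi r_m^2/V_{r_m}(r_m)=2\pi/(4\pi/3+\sqrt{3}/2)\approx 1.24>1$. Thus for dense underlying $\PPP$s the retained points are \emph{positively} correlated at distances just above the exclusion radius; the ``cancellation after averaging over the Poisson marks in the overlap'' you hope for does not occur there. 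For the same reason the shortcut you mention does not exist (Mat\'ern~II is not negatively associated: negative association would force $g(r)\leq 1$ everywhere), and your monotone-coupling lift fails too, since its premise --- that conditioning on one retention can only decrease the conditional probability of another retention --- is contradicted at exactly those separations. So the route ``pairwise negative correlation at all separations $+$ coupling to general $k$'' cannot be completed as stated.

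For comparison, the paper's own proof is far more modest in scope: it only examines indicators of nodes lying inside a common exclusion neighborhood (the typical node and its $C_m$ neighbors within $r_m$), where at most one node can win the minimum-mark competition, so the all-ones event (and, as argued there, the all-zeros event) has probability zero while the marginal probabilities $1/(k+1)$ are strictly positive; the product bounds of Definition~\ref{NegativeDependence} then hold trivially with strict inequality. In other words, the negative dependence being asserted and used (e.g.\ to justify (\ref{kpointseliminate})) is a within-exclusion-region statement, not the all-distances statement your proposal tries to establish --- and the latter is actually false for this process. If you want to salvage your approach, restrict the tuple of indicators to mutually conflicting nodes (pairwise distances at most $r_m$), where your $k=2$ ``hard-core exclusion forces $Y_{(m,i)}Y_{(m,j)}=0$'' observation already does all the work, and drop both the intermediate-overlap integral claim and the coupling step.
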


\begin{proof}
See Appendix \ref{App:Appendix-MHCnegdep}.
\end{proof}

As the file popularity increases, the exclusion radius gets smaller. Hence, the average number of transmitters within the exclusion region, i.e., $\bar{C}_m^*$, decreases, and the chance of having at least one transmitter caching that file within $\Rdd$ increases, i.e., $\mathbb{P}(\tilde{C}_m\geq1)>\mathbb{P}(\tilde{C}_n\geq1)$ for $m<n$. This yields a higher $\PMA(\cdot)$ for more popular files from (\ref{cacheprobmatern}). If the demand distribution is uniform over the network, then each file has the same caching probability, i.e., $\PMA(m)$ is the same for all $m$, yielding the same $r_m$ for all $m$, which is intuitive. When the demand distribution is skewed towards the more popular files, then $\txMA(m)$ scales with the request popularity and $r_m$ is inversely proportional to $p_r(m)$, i.e., less popular files will end up being stored in fewer locations, and popular files will be guaranteed to be available over a larger geographic area, which is intuitive. 

In the $\mhcA$ model, using the pdf in (\ref{ffunctionMHC}) that denotes the configuration of the retained transmitters, the miss probability of file $m$ given $k$ users cover a typical receiver is
\begin{align}
\label{missMHC}
\PmissMA(m,k)=\idotsint\nolimits_{\mathcal{V}^k} f_m(x_1, \dots, x_k)   \,\mathrm{d}x_1\hdots \mathrm{d}x_k,
\end{align}
where the region $\mathcal{V}^k$ characterizes the cache miss region given there exists $k$ $\DD$ nodes, i.e., it is the $2k$ dimensional region denoted by $\mathcal{V}^k=[0,D]^{2k}\backslash [0,\Rdd]^{2k}$. 

The maximum hit probability for the $\mhcA$ model is given by the solution of  
\begin{equation}
\begin{aligned}
\label{eq:hitprob-matern}
\max_{\PMA(m)} &\quad \PhitMA\\
\textrm{s.t.}
& \quad \sum\limits_{m=1}^M{\PMA(m)}\leq N,
\end{aligned}
\end{equation}
and $\PmissMA(m,k)$ is given in (\ref{missMHC}), which is related to $\PhitMA$ through the $\PhitX$ expression given in (\ref{PhitX}) of the original optimization formulation in (\ref{eq:hitprobX}).

\begin{prop}\label{AvgHitProbmhcA}
The average cache hit probability for the $\mhcA$ model is
\begin{align}
\label{MHCaModelHitProbability}
\PhitMA=\sum\limits_{m=1}^M{p_r(m)\mathbb{P}(\tilde{C}_m> 0\vert r_m)},
\end{align} 
where the term $\mathbb{P}(\tilde{C}_m> 0\vert r_m)$ is essential in determining the cache hit probability and given as
\begin{align}
\label{distributionCm}
\mathbb{P}(\tilde{C}_m> 0\vert r_m)
\begin{cases}
\geq 1-\exp(-\txMA(m)\pi {\Rdds}), \quad r_m<\Rdd,\\
= \txMA(m)\pi\Rdds, \quad  r_m\geq\Rdd.
\end{cases}
\end{align}
\end{prop}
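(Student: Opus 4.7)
The plan is to derive both the main formula and the two-case expression for $\mathbb{P}(\tilde{C}_m>0\vert r_m)$ by starting from the general hit-probability identity in (\ref{PhitX}) and regrouping terms by file rather than by coverage number. Specifically, a cache hit for file $m$ at the typical receiver occurs iff at least one of the transmitters within $B_0(\Rdd)$ carries file $m$, which by definition is the event $\{\tilde{C}_m>0\}$. So I would argue
\[
\PhitMA \;=\; \sum_{m=1}^{M} p_r(m)\,\mathbb{P}(\tilde{C}_m>0\vert r_m),
\]
either by conditioning on the coverage number $\mathcal{N}_{\mhcA}=k$ and identifying $1-\PmissMA(m,k)=\mathbb{P}(\tilde{C}_m>0\mid \mathcal{N}_{\mhcA}=k,r_m)$, then averaging out $k$, or more directly by noting that the indicator of a hit for file $m$ coincides with $\mathbf{1}\{\tilde{C}_m>0\}$. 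This gives (\ref{MHCaModelHitProbability}) with no further work.

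For the second half, I would split on the size of $r_m$. In the regime $r_m\geq \Rdd$, the hard-core exclusion implies that at most one retained transmitter of the file-$m$ thinned process can lie in the receiver's disc $B_0(\Rdd)$, so $\tilde{C}_m\in\{0,1\}$ and therefore $\mathbb{P}(\tilde{C}_m>0\vert r_m)=\mathbb{E}[\tilde{C}_m\vert r_m]$. By Campbell's formula applied to the stationary thinned point process of density $\txMA(m)=\PMA(m)\,\tx$ given in (\ref{lambdamhcA}), this mean equals $\txMA(m)\pi\Rdds$, yielding the equality in (\ref{distributionCm}).

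For the regime $r_m<\Rdd$, I would work with the void probability $\mathbb{P}(\tilde{C}_m=0\vert r_m)$. Conditioning on the realization of the underlying $\PPP$ $\Phi$ in $B_0(\Rdd)$ and writing the void event as the intersection $\bigcap_{x_i\in\Phi\cap B_0(\Rdd)}\{Y_{(m,x_i)}=0\}$, I would invoke Proposition \ref{MHCnegdep} (negative dependence of the $\mhc$ placement indicators, instantiated through Definition \ref{NegativeDependence}) to upper-bound this joint probability by the product $\prod_{x_i}(1-\PMA(m))$. Taking the expectation over $\Phi$, the Poisson PGFL (equivalently, summing $\mathbb{P}(|\Phi\cap B_0(\Rdd)|=k)(1-\PMA(m))^k$ over $k$) collapses to $\exp(-\tx\pi\Rdds\,\PMA(m))=\exp(-\txMA(m)\pi\Rdds)$, whence the lower bound $\mathbb{P}(\tilde{C}_m>0\vert r_m)\geq 1-\exp(-\txMA(m)\pi\Rdds)$ follows.

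I expect the main obstacle to lie in the $r_m<\Rdd$ case: the negative-dependence reduction to a PPP-style void probability needs the $Y_{(m,x_i)}$ indicators to satisfy the joint inequality of Definition \ref{NegativeDependence} across a \emph{random} collection of caches, which requires applying the negative-dependence bound conditionally on $\Phi$ before averaging. A secondary care-point is making the exclusion argument in the $r_m\geq \Rdd$ regime fully consistent with the configuration pdf (\ref{ffunctionMHC}), so that the event $\{\tilde{C}_m\geq 2\}$ is genuinely assigned zero mass by $f_m$ and Campbell's formula applies to the retained process with density $\txMA(m)$ rather than to the ambient $\PPP$.
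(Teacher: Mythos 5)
Your proposal is correct and follows essentially the same route as the paper's proof: you split on $r_m \geq \Rdd$ versus $r_m < \Rdd$, identify $\mathbb{P}(\tilde{C}_m>0\vert r_m\geq\Rdd)$ with $\mathbb{E}[\tilde{C}_m\vert r_m\geq\Rdd]=\txMA(m)\pi\Rdds$ using the at-most-one-point argument, and bound the void probability for $r_m<\Rdd$ by the product $(1-\PMA(m))^k$ via negative dependence before averaging over the Poisson number of covering caches, which is exactly the paper's computation in (\ref{kpointseliminate})--(\ref{probhavingoneTXsmallrm}). Your explicit conditioning on $\Phi$ followed by the PGFL is just a restatement of the paper's sum over $\mathbb{P}(\mathcal{N}_P=k)$, so no substantive difference remains.
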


\begin{proof}
See Appendix \ref{App:Appendix-AvgHitProbmhcA}.
\end{proof}

\begin{figure*}[t!]
\centering
\includegraphics[width=\textwidth]{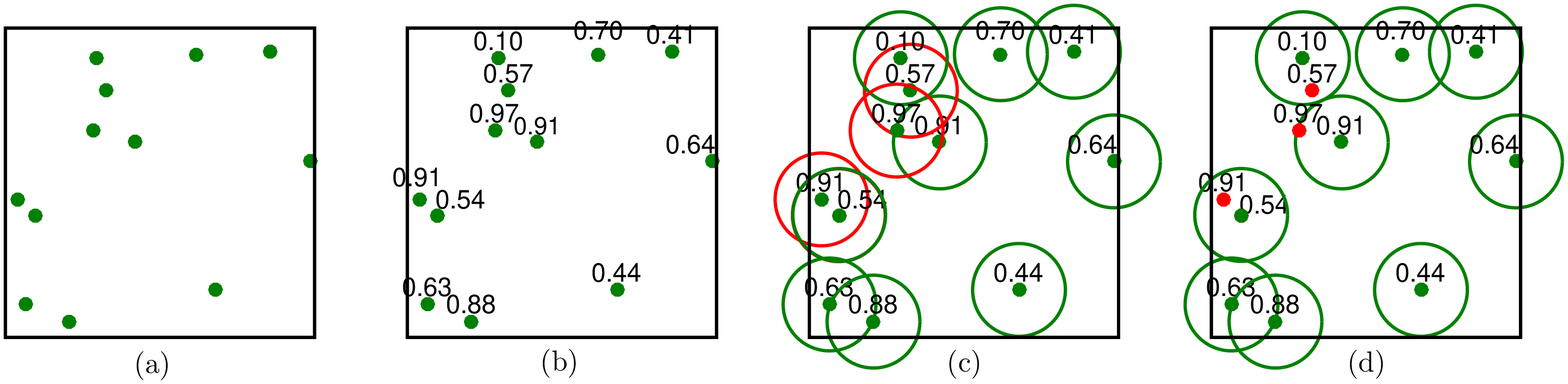}\\
\caption{\small{$\mhc$ point process realization for a given exclusion radius $R$: (a) Begin with a realization of $\PPP$, $\phi$. (b) Associate a uniformly distributed mark $U[0,1]$ to each point of $\phi$ independently. (c) A node $x\in\phi$ is selected if it has the lowest mark inside $B_x(R)$. (d) Set of selected points for a given realization of the $\PPP$. We exploit the $\mhc$ model to pick a subset of $\DD$ nodes to cache the files, where there is a distinct exclusion radius for each file, and the exclusion radii are determined by the underlying file popularity distribution.}}
\label{Matern}
\end{figure*}

The optimal solution of the $\mhcA$ model in (\ref{eq:hitprob-matern}) is characterized by Theorem \ref{HCP}.

\begin{theo}\label{HCP}
{\bf Hard-Core Content Placement ($\HCP$).} The optimal caching distribution for the $\HCP$ model is given as follows
\begin{align}
\label{pcoptforMHCA}
\PMAstar(m)=\begin{cases}
\tx^{-1}W(cp_r(m)),\quad m\leq \mc,\\
\tx^{-1}cp_r(m),\quad m>\mc,
\end{cases}
\end{align}
where $W$ is the Lambert  function, and $\mc=\argmax\limits_{m\in\{1,\cdots,M\}} \{r_m\vert r_m<\Rdd\}$, and the relation
\begin{align}
\label{lambdaoptforMatern}
\sum\limits_{m=1}^{\mc}{W(cp_r(m))-cp_r(m)}=N\tx-c
\end{align}
can be used to determined the value of $c$. Hence, we determine $\txMAstar(m)$ and the optimal value of the exclusion radius, i.e., $r^*_m$, from (\ref{lambdaoptforMatern}) as a function of the request pmf $p_r(m)$, cache size $N$ and the transmitter density $\tx$. 
\end{theo}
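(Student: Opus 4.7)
The plan is to solve (\ref{eq:hitprob-matern}) via Lagrangian relaxation, exploiting the two-regime structure of (\ref{distributionCm}). I will first argue that the cache-size budget is active at optimality because $\PhitMA$ in (\ref{MHCaModelHitProbability}) is strictly monotone in every $\PMA(m)$ on the feasible range, so $\sum_m \PMAstar(m) = N$. I will also observe that the map $r_m \mapsto \PMA(m)$ implied by (\ref{cacheprobmatern}) is a strictly decreasing bijection (cf. the limits in (\ref{rmvspcm})), so, combined with the Zipf decay of $p_r(\cdot)$, files can be ordered so that the region $\{r_m < \Rdd\}$ corresponds to a contiguous popular prefix $\{1, \ldots, \mc\}$, matching the definition of $\mc$ in (\ref{pcoptforMHCA}).

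Next, using the piecewise expression (\ref{distributionCm}) for $\mathbb{P}(\tilde C_m > 0 \mid r_m)$, I would write the Lagrangian
\[
\mathcal{L}(\PMA,\mu) = \sum_{m=1}^{M} p_r(m)\,\mathbb{P}(\tilde C_m > 0 \mid r_m) - \mu\left(\sum_{m=1}^{M}\PMA(m) - N\right),
\]
and impose the stationarity conditions $\partial \mathcal{L}/\partial \PMA(m) = 0$ separately in each regime. In the large-radius regime ($m > \mc$), $\mathbb{P}(\tilde C_m > 0 \mid r_m) = \tx \pi \Rdds \PMA(m)$ is linear in $\PMA(m)$, so the KKT condition collapses directly to $\PMAstar(m) = \tx^{-1} c\, p_r(m)$, where $c$ absorbs $\mu$ and the other problem constants. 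In the small-radius regime ($m \leq \mc$), the coupling (\ref{cacheprobmatern}) between $\PMA(m)$ and $\bar C_m = \tx \pi r_m^2$ must be folded back into the derivative; after this substitution and the appropriate algebraic simplification, the stationarity condition rearranges into the form $\tx \PMAstar(m)\, e^{\tx \PMAstar(m)} = c\, p_r(m)$ with the same $c$. The defining identity $W(y)\, e^{W(y)} = y$ of the principal branch of the Lambert $W$ function then delivers $\PMAstar(m) = \tx^{-1} W(c\,p_r(m))$, which is the first line of (\ref{pcoptforMHCA}).

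To close the argument, I would substitute the two piecewise expressions into the active budget $\sum_m \PMAstar(m) = N$ and rearrange using $\sum_m p_r(m) = 1$; collecting the linear $c\,p_r(m)$ terms produces exactly the scalar equation (\ref{lambdaoptforMatern}) that determines $c$. Monotonicity of $W$ on $[0,\infty)$ makes the left-hand side of (\ref{lambdaoptforMatern}) strictly increasing in $c$, giving existence and uniqueness of $c > 0$; optimal exclusion radii $r_m^*$ are then recovered from $\PMAstar(m)$ via the inverse of (\ref{cacheprobmatern}).

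The main obstacle is the emergence of the Lambert $W$ function in the $r_m < \Rdd$ regime. If one naively differentiates only the Poisson-like lower bound $1 - e^{-\tx \pi \Rdds \PMA(m)}$ in (\ref{distributionCm}) without tracking how $\PMA(m)$ is tied to the exclusion radius through the $\mhcA$ construction, the stationarity condition yields an ordinary logarithm and reproduces the Poisson-based $\GCP$ solution in Theorem \ref{GCP}, rather than the $xe^x = \text{const}$ form required here. Correctly incorporating the hard-core coupling (\ref{cacheprobmatern}) between $\PMA(m)$ and $\bar C_m$ into the KKT derivative is therefore the crux of the argument. A secondary subtlety is verifying that the partition produced by (\ref{pcoptforMHCA}) is self-consistent with the regime condition $r_m^* \lessgtr \Rdd$ (which follows from the monotonicity of $p_r(\cdot)$ and of $W$) and checking second-order concavity so that the stationary point is the unique maximizer rather than a saddle.
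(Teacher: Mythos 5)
Your overall route matches the paper's in outline -- a Lagrangian relaxation of (\ref{eq:hitprob-matern}) built on the two-regime expression (\ref{distributionCm}), an active budget $\sum_m \PMAstar(m)=N$, the linear regime collapsing to $\PMAstar(m)=\tx^{-1}c\,p_r(m)$, and the bookkeeping with $\sum_m p_r(m)=1$ that converts the budget into (\ref{lambdaoptforMatern}); those pieces are fine. The gap is precisely in the step that produces the theorem, the Lambert-$W$ branch for $m\leq \mc$. You correctly note that stationarity of $p_r(m)\bigl[1-e^{-\tx\pi\Rdds \PMA(m)}\bigr]-\mu \PMA(m)$ yields a logarithmic, $\GCP$-type solution, and you propose to escape this by ``folding the coupling (\ref{cacheprobmatern}) between $\PMA(m)$ and $\bar{C}_m$ back into the derivative.'' That substitution cannot change the outcome: both the objective term and the cache constraint depend on $r_m$ (equivalently $\bar{C}_m$) only through $\PMA(m)=\bigl(1-e^{-\bar{C}_m}\bigr)/\bar{C}_m$, so re-expressing the problem in $\bar{C}_m$ or $r_m$ multiplies both terms of the first-order condition by the same nonzero Jacobian factor $d\PMA(m)/d\bar{C}_m$ and leaves the stationarity equation -- and hence the logarithmic solution -- unchanged. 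The asserted rearrangement into $\tx\PMAstar(m)\,e^{\tx\PMAstar(m)}=c\,p_r(m)$ is never derived, and the mechanism you invoke provably cannot deliver it.

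The paper arrives at the $xe^{x}$ relation by a different manipulation: it differentiates its Lagrangian $\mathcal{M}(\zeta)$ in $\txMA(m)$, solves for the multiplier file by file, obtaining $\zeta^*=-p_r(m)\frac{\tx\pi\Rdds}{\txMA(m)}e^{-\txMA(m)\pi\Rdds}$ for $m\leq \mc$ and $\zeta^*=-p_r(m)\frac{\tx\pi\Rdds}{\txMA(m)}$ for $m>\mc$, and then imposes that this per-file ratio be one common constant across all $m$; equating the ratios is what yields $\txMAstar(m)e^{\txMAstar(m)}=c\,p_r(m)$ and $\txMAstar(m)=c\,p_r(m)$, i.e., (\ref{pcoptforMHCA}), after which the budget gives (\ref{lambdaoptforMatern}). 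Your proposal never reaches this ratio structure, so the first line of (\ref{pcoptforMHCA}) is unsupported as written. Two smaller points: the self-consistency of the partition at $\mc$ is exactly the check you flag, but it is not settled by monotonicity of $p_r$ and $W$ alone since $\mc$ is defined through the optimized radii; and your uniqueness argument for $c$ is off -- $W(c\,p_r(m))-c\,p_r(m)$ is decreasing in $c$ (because $W'(y)<1$ for $y>0$), not increasing, so one must instead compare its slope with that of the right-hand side $N\tx-c$.
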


\begin{proof}
See Appendix \ref{App:Appendix-mhcAhitmaximum}.
\end{proof}

Consider a ball centered at origin and of radius $D$, i.e., $B_0(D)$, with $D\gg \max\nolimits_m\{r_m\}$, let the number of users in $B_0(D)$ be $\Poisson$ with $\mathbb{P}(\mathcal{N}_P(D)=k)=e^{-\bar{C}_D}\frac{(\bar{C}_D)^k}{k!}$, where $\bar{C}_D=\tx \pi D^2$ is the average number of transmitters within $B_0(D)$. Due to the limited storage capacity of the caches, the mean total number of files that can be cached in $B_0(D)$ is upper bounded by $N \bar{C}_D$. To determine the average number of users containing a desired file type in region $B_0(D)$, we use the second-order product density of the $\mhc$ process $\Phi_M$, which is defined next. 

\begin{defi}{\bf Second-order product density \cite[Ch. 5.4]{Stoyan1996}.}
For a stationary point process $\Phi_M$, the second-order product density is the joint probability that there are two points of $\Phi_M$ at locations $x$ and $y$ in the infinitesimal volumes $dx$ and $dy$
, and given by 
\begin{eqnarray}
\label{SOPD}
\rho_m^{(2)}(r)
=\begin{cases}
\txMAs(m),\quad r\geq 2r_m\\
\dfrac{2V_{r_m}(r)[1-\exp(-\tx \pi r_m^2)]-2\pi r_m^2[1-\exp(-\tx V_{r_m}(r))]}{\pi r_m^2V_{r_m}(r)[V_{r_m}(r)-\pi r_m^2]},\,\, r_m<r<2r_m,\\
0,\quad r\leq r_m
\end{cases}
\end{eqnarray}
where $\lambda_t^{-2}\rho_m^{(2)}(r)$ is the two-point Palm probability that two points of $\Phi$ separated by distance $r$ are both retained to store file $m$ \cite[Ch. 5.4]{Stoyan1996}, and $V_{r_m}(r)=2\pi r_m^2-2r_m^2\cos^{-1}\left(\frac{r}{2r_m}\right)+r\sqrt{r_m^2-\frac{r^2}{4}}$ is the area of the union of two circles with radius $r_m$ and separated by distance $r$. Pairwise correlations between the points separated by $r>r_m$ are modeled using the second-order product density --$\rho_m^{(2)}(r)$ for file $m$-- of the $\mhc$ process. 
\end{defi}

Using the Campbell's theorem \cite[Ch. 1.4]{BaccelliBook1}, we deduce that the average number of transmitters of the stationary point process $\Phi_M$ --conditioned on there being a point at the origin but not counting it-- contained in the ball $B_0(\Rdd)$ is given by
\begin{eqnarray}
\label{AvgNoTx}
\mathbb{E}^{!\circ}\left[\sum\limits_{x\in\Phi_M} 1(x\in B_0(\Rdd))\right]={\tx}^{-1}\int\nolimits_{B_0(\Rdd)}\rho_m^{(2)}(x)\,dx.
\end{eqnarray}

An upper bound on the probability that a user requesting file $m$ is covered is given by the following expression:
\begin{align}
\label{probhavingoneTXsmallrmbound}
\mathbb{P}(\tilde{C}_m\geq 1\vert r_m<\Rdd)&\stackrel{(a)}{\leq} \mathbb{E}[\tilde{C}_m\vert r_m<\Rdd]\nonumber\\&\stackrel{(b)}{=}1-\exp(-\txMAstar(m) \pi {\Rdds})+{\tx}^{-1}\int\nolimits_{B_0(\Rdd)}\rho_m^{(2)}(x){\rm d}x,
\end{align}
where $(a)$ follows from using Markov inequality, and $(b)$ from using (\ref{AvgNoTx}), to deduce the average number of caches that stores file $m$ in $B_0(\Rdd)$.

\begin{prop}\label{cachehitprobMatern}
The maximum cache hit probability for the $\mhcA$ model is approximated by the following lower and upper bounds:
\begin{align}
\label{PhitMbounds}
\PhitMALB &= \sum\limits_{m=1}^{\mc}{p_r(m)[1-e^{-\txMAstar(m)\pi {\Rdds}}]}+\sum\limits_{m=\mc+1}^{M}{p_r(m)\txMAstar(m)\pi\Rdds},\nonumber\\
\PhitMAUB &= \PhitMALB + \sum\limits_{m=1}^{\mc}{p_r(m) {\tx}^{-1}\int\nolimits_{r_m^*}^{\Rdd}\rho_m^{(2)}(x){\rm d}x},
\end{align}
where $\bar{C}_m^*=\tx\pi (r_m^*)^2$ with each $r_m^*$ denoting the optimal value of the radius $r_m$ for $m=1,\hdots, M$ that maximizes the average cache hit probability, and $\txMAstar(m)$ follows from plugging $r_m^*$ into (\ref{lambdamhcA}).
\end{prop}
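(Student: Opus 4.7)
The plan is to start from the exact expression for the cache hit probability in (\ref{MHCaModelHitProbability}), namely $\PhitMA=\sum_{m=1}^M p_r(m)\,\mathbb{P}(\tilde{C}_m>0\mid r_m)$, and split the outer sum into two pieces at $m=\mc$, which by definition separates files whose optimal exclusion radius satisfies $r_m^*<\Rdd$ (indices $m\le \mc$) from those with $r_m^*\ge\Rdd$ (indices $m>\mc$). On the upper branch of this split, equation (\ref{distributionCm}) already gives the exact identity $\mathbb{P}(\tilde{C}_m>0\mid r_m)=\txMA(m)\pi\Rdds$, so that piece appears identically in both the lower and upper bound with $\txMA(m)$ replaced by $\txMAstar(m)$; it is simply carried through the argument unchanged.

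For the lower bound on the remaining branch $m\le \mc$, I substitute the first inequality of (\ref{distributionCm}), namely $\mathbb{P}(\tilde{C}_m>0\mid r_m)\ge 1-\exp(-\txMA(m)\pi\Rdds)$, evaluated at the optimum $r_m^*$; weighting by $p_r(m)$ and summing yields the first $\mc$ terms of $\PhitMALB$ and completes the lower bound. Conceptually, this inequality is the Poisson lower-envelope one would obtain by pretending the retained process were still $\PPP$ with intensity $\txMAstar(m)$, which is a legitimate lower bound because hard-core thinning reduces clustering and therefore spreads coverage more uniformly.

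For the upper bound on the branch $m\le \mc$, the key ingredient is (\ref{probhavingoneTXsmallrmbound}): I apply Markov's inequality $\mathbb{P}(\tilde C_m\ge 1)\le \mathbb{E}[\tilde C_m]$, and then evaluate $\mathbb{E}[\tilde C_m]$ by the Campbell formula for the stationary hard-core process $\Phi_M$, decomposing it into the ``typical retained cache at the origin'' contribution (which, after converting back through (\ref{lambdamhcA}), produces the first-order term $1-\exp(-\txMAstar(m)\pi\Rdds)$ identical to the lower bound) plus the pair-correlation correction $\tx^{-1}\!\int_{B_0(\Rdd)}\rho_m^{(2)}(x)\,\mathrm{d}x$ coming from the reduced second moment measure of $\Phi_M$. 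Since $\rho_m^{(2)}(r)=0$ for $r\le r_m^*$ by the hard-core exclusion (see (\ref{SOPD})), the integration region collapses to the annulus $r\in[r_m^*,\Rdd]$, producing the correction integral $\int_{r_m^*}^{\Rdd}\rho_m^{(2)}(x)\,\mathrm{d}x$ as written. Adding the two branch contributions yields $\PhitMAUB=\PhitMALB+\sum_{m=1}^{\mc}p_r(m)\tx^{-1}\!\int_{r_m^*}^{\Rdd}\rho_m^{(2)}(x)\,\mathrm{d}x$.

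The main obstacle is twofold. First, I need to justify cleanly the Palm-type decomposition of $\mathbb{E}[\tilde C_m]$ into the ``origin cache'' and ``Campbell correction'' pieces: this uses a Slivnyak-like identity on the parent $\PPP$ together with the definition of $\rho_m^{(2)}$ as the joint retention density, and it is exactly this decomposition that is invoked in passing in (\ref{probhavingoneTXsmallrmbound}), so I can import it. Second, and more importantly, the Markov step $\mathbb{P}(\tilde C_m\ge 1)\le \mathbb{E}[\tilde C_m]$ is generally loose when $r_m^*$ is small (so that $\tilde C_m$ can be $\gg 1$), which is why the proposition is phrased as ``approximated by'' rather than as a strict analytical envelope; I will state this as an approximation and note that the gap between $\PhitMALB$ and $\PhitMAUB$ shrinks precisely when $\txMAstar(m)\pi\Rdds$ is small, i.e., in the regime the paper identifies as most favorable for $\HCP$.
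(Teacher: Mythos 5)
Your proposal is correct and follows essentially the paper's own route: the upper bound is obtained exactly as in the paper by importing (\ref{probhavingoneTXsmallrmbound}) (Markov's inequality plus the Campbell/second-order term, with $\rho_m^{(2)}$ vanishing below $r_m^*$ so the integral reduces to the annulus), and the lower bound is the same Poisson-void bound evaluated at the optimized radii, split at $\mc$ with the exact expression $\txMAstar(m)\pi\Rdds$ for $m>\mc$. The only cosmetic difference is that you invoke (\ref{distributionCm}) of Proposition \ref{AvgHitProbmhcA} directly, whereas the paper re-derives that bound via the configuration-pdf miss-probability integral (\ref{missMHC}) before substituting into (\ref{PhitX}); the resulting expressions coincide.
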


\begin{proof}
See Appendix \ref{App:Appendix-cachehitprobMatern}.
\end{proof}

To compare the performance of the $\GCP$ and the $\HCP$ models in terms of their average cache hit probabilities, we next consider an example.

\begin{ex}\label{ExGCPHCP}
{\bf Cache hit rate comparison for $\GCP$ and $\HCP$.} 
Consider a simple caching scenario with $M=2$ files and a cache size of $N=1$, and the request distribution satisfies $p_r(1)=2/3$ and $p_r(2)=1/3$. Let $\tx \pi=1$ and assume $\Rdd$ is given.
\begin{itemize}
\item In the $\GCP$ model, from Theorem \ref{GCP}, given the product $\tx \pi \Rdds$, the values of $\mathbb{P}(\mathcal{N}_P=1)$, $\mathbb{E}[\mathcal{N}_P]$ can be computed. Checking the conditions in (\ref{pcoptimal}), the optimal value of $\mu$, and $\PGstar(1)$ and $\PGstar(2)$ can be determined. Thus, from (\ref{IndependentHitProbability}), the optimal cache hit probability for the $\GCP$ model becomes $\PhitGstar=\sum\nolimits_{m=1}^2 p_r(m)[1-\exp(-\PGstar(m)\tx \pi \Rdds)]$.

\item In the $\HCP$ model, from (\ref{lambdamhcA}), we have $\txMA(m)= \frac{[1 - \exp(-\bar{C}_m)]}{\pi r_m^2}  = \PMA(m)\tx$ for $m=1,2$. Using the cache constraint, $\sum\nolimits_{m=1}^2\txMA(m)=\tx$. Thus, from (\ref{MHCaModelHitProbability}), the cache hit probability for the $\GCP$ model becomes $\PhitMA=2/3\mathbb{P}(\tilde{C}_1> 0\vert r_1)+1/3\mathbb{P}(\tilde{C}_2> 0\vert r_2)$, where from (\ref{distributionCm}), we compute $\mathbb{P}(\tilde{C}_m> 0\vert r_m)$ using the lower bound in Prop. \ref{cachehitprobMatern}. 

The optimal values $\PhitGstar$, $\PhitMALBstar$ for different $\Rdd$ are tabulated in Table \ref{table:tabExample1}. For $\Rdd$ high, as the lower bound of the $\HCP$ model is very close to $\PhitGstar$, both models perform similarly. However, for small $\Rdd$, the $\HCP$ model outperforms (with a cache hit rate gain up to $25\%$ using the lower bound) because it can exploit the spatial diversity.
\end{itemize}
\end{ex}

\begin{table}[t!]\small
\begin{center}
\setlength{\extrarowheight}{3pt}
\begin{tabular}{| l | c | c | c | c | c | c | c | c |}
\hline
$\Rdd$ & $\mu^*$ & $\PGstar(1)$, $\PGstar(2)$ & $\PhitGstar$ & $r_1^*$, $r_2^*$ & $\txMAstar(1)$, $\txMAstar(2)$  & $\PhitMALBstar$ \\ 
\hline
$\sqrt{0.5}$ & $0.1836$ & $1$, $0$ & ${\bf 0.2623}$ & $0.7071$, $1.7117$ & $0.2813$, $0.0370$ & ${\bf 0.3140}$ \\
$\sqrt{0.75}$ & $0.2430$ & $0.9621$, $0.0379$ & ${\bf 0.352}$ & $0.866$, $1.4283$ & $0.2428$, $0.0756$ & ${\bf 0.4407}$ \\
$1$ &  $.28592$  &$0.8466$, $0.1534$  &  ${\bf 0.4282}$  & $1$,  $1.257$ & $0.201$, $0.1174$ & ${\bf 0.5438}$ \\
$\sqrt{2}$ & $0.3468$ & $0.6733$, $0.3267$ & ${\bf 0.6532}$ & $0.8718, 1.4178$ & $0.2411, 0.0772$ & ${\bf 0.6818}$\\
$\sqrt{3}$ & $0.3156$ & $0.6155$, $0.3845$ &  ${\bf 0.7896}$  & $1.0149$,  $1.2410$      & $0.1961$, $0.1222$  &${\bf 0.7896}$ \\
$\sqrt{10}$& $0.0318$ & $0.5347$, $0.4653$ & ${\bf 0.9936}$ & $1.0909$, $1.1576$ & $0.1704$, $0.1479$ & ${\bf 0.9936}$\\
$10$ & $9.0926 e^{-21}$ & $0.5035$, $0.4965$ & ${\bf 1}$ & $1.1225$, $1.1225$ & $0.1592$, $0.1592$ & ${\bf 1}$\\
\hline
\end{tabular}
\end{center}
\caption{Numerical results for Example \ref{ExGCPHCP}, with $M=2$, $N=1$ and $p_r(1)=2/3$ $p_r(2)=1/3$, where the results for the $\HCP$ model are obtained by optimizing $\PhitMALB$ in (\ref{PhitMbounds}) of Proposition \ref{cachehitprobMatern}.}
\label{table:tabExample1}
\end{table}

Ideally, when a cache placement strategy is applied, the files need to be placed at a cache in a way that all the cache slots are occupied. In the $\GCP$ model in \cite{Blaszczyszyn2014}, authors propose a probabilistic placement policy to fill the caches. However, in the case of $\mhcA$ placement, due to the random assignment of the marks in each cache independently for distinct files, it is not guaranteed that all the caches are full in the $\mhcA$ approach, which causes underutilization of the caches as detailed next. 

\begin{prop}\label{mhcAunderutilization}
{\bf Cache underutilization.} 
The $\HCP$ placement model causes underutilization of the caches, i.e., on average, the fraction of the $\DD$ nodes of $\Phi$ that contain $N$ distinct files is always less than $1$. This can be formally stated as follows: 
\begin{align}
\label{cacheunderutilizationequation}
\frac{1}{N\mathbb{E}[\mathcal{N}_P]}\sum\limits_{m=1}^M{\mathbb{E}[\tilde{C}_m]}\leq 1,
\end{align}
where $\mathbb{E}[\mathcal{N}_P]=\tx \pi \Rdds$.
\end{prop}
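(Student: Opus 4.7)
The plan is to establish (\ref{cacheunderutilizationequation}) by summing the hard per-cache storage constraint over the typical $\DD$ ball and taking expectations. The $\mhcA$ algorithm enforces, by construction, the pointwise inequality $\sum_{m=1}^M Y_{(m,i)} \leq N$ at every $x_i \in \Phi$: once a cache has been selected $N$ times across the sequential per-file mark comparisons, it is declared full and skipped on all subsequent passes (see Sect.~\ref{hardcore1}). This deterministic capacity bound is the single ingredient required.

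First I would sum the pointwise inequality $\sum_m Y_{(m,i)} \leq N$ over all caches $x_i \in \Phi \cap B_0(\Rdd)$ and swap the order of summation to obtain
$$\sum_{m=1}^M \tilde{C}_m \;=\; \sum_{x_i \in \Phi \cap B_0(\Rdd)}\sum_{m=1}^M Y_{(m,i)} \;\leq\; N\,\mathcal{N}_P ,$$
using the definition $\tilde{C}_m = \sum_{x_i \in \Phi \cap B_0(\Rdd)} Y_{(m,i)}$. Taking expectations, linearity together with the Poisson intensity formula $\mathbb{E}[\mathcal{N}_P] = \tx \pi \Rdds$ yields $\sum_m \mathbb{E}[\tilde{C}_m] \leq N \mathbb{E}[\mathcal{N}_P]$; dividing through by $N \mathbb{E}[\mathcal{N}_P]$ produces (\ref{cacheunderutilizationequation}). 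An analytic cross-check is available through Campbell's theorem applied to the stationary retained process for each file: since $\mathbb{E}[\tilde{C}_m] = \txMA(m)\pi \Rdds = \PMA(m)\,\tx \pi \Rdds$, the ratio on the left-hand side of (\ref{cacheunderutilizationequation}) collapses to $\frac{1}{N}\sum_m \PMA(m)$, which is bounded by $1$ by the relaxed cache constraint of problem (\ref{eq:hitprob-matern}).

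The substantive point --- \emph{underutilization}, rather than mere feasibility --- is that this inequality is typically strict. The main obstacle here is that the per-file thinnings are driven by \emph{independent} uniform marks that are not coordinated across files: nothing forces a given cache to accumulate exactly $N$ distinct selections. Popular files with small $r_m$ give $\PMA(m)$ close to but below $1$, unpopular files with large $r_m$ give $\PMA(m)$ close to $0$, and generically $\sum_m \PMA(m) < N$. Making this quantitative would require computing $\mathbb{P}\bigl(\sum_m Y_{(m,i)} = N\bigr)$ under the dependent MHC-A marks, which is more delicate; since the proposition as stated asks only for the weak inequality, however, the capacity argument above closes the proof.
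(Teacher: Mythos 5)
Your proof is correct, but your primary route differs from the paper's. The paper never invokes the per-realization capacity rule; instead it bounds $\mathbb{E}[\tilde{C}_m\vert r_m<\Rdd]$ by a packing argument, $\mathbb{E}[\tilde{C}_m\vert r_m<\Rdd]\leq[1-e^{-\bar{C}_m}]\big(\tfrac{\Rdd}{r_m}\big)^2=\txMA(m)\pi\Rdds$, uses the exact value $\txMA(m)\pi\Rdds$ for $r_m\geq\Rdd$, and then closes with the relaxed constraint $\sum_m\PMA(m)\leq N$ scaled by $\tx\pi\Rdds$ --- which is precisely your Campbell ``cross-check,'' so that part of your argument coincides with the paper's proof (and is in fact cleaner, since Campbell's theorem gives equality where the paper settles for an inequality). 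Your primary argument --- summing the pointwise bound $\sum_m Y_{(m,i)}\leq N$ over $\Phi\cap B_0(\Rdd)$ and taking expectations --- is more elementary and avoids the MHC intensity formulas entirely, but note the caveat: it is airtight only if $\tilde{C}_m$ is understood as the output of the algorithm with the ``cache full'' rule enforced. The analytical model the paper actually uses to evaluate $\mathbb{E}[\tilde{C}_m]$ treats the per-file thinnings as independent mark comparisons with only the relaxed (average) constraint, and there a cache can in principle be selected more than $N$ times (the cache ``overuse'' the paper flags in a footnote), so the pointwise inequality $\sum_m\tilde{C}_m\leq N\mathcal{N}_P$ is not automatic in that reading. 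Since you also supplied the Campbell-plus-relaxed-constraint argument, the proposition is established either way; and your acknowledgement that only the weak inequality (not strict underutilization) is proven matches the paper, whose proof likewise establishes $\leq$ and only discusses strictness and its asymptotic collapse to equality informally.
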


\begin{proof}
See Appendix \ref{App:Appendix-mhcAunderutilization}.
\end{proof}

The storage size $N$ and the exclusion radius $r_m$ have an inverse relationship. As $N$ drops, because it is not possible to cache the files at all the transmitters, the exclusion radius should increase to bring more spatial diversity into the model. From the storage constraint in (\ref{eq:hitprob-matern}), as $N$ drops, $r_m$ increases ($r_m\to \infty$ as $N\to 0$). Hence, a typical receiver won't be able to find its requested files within its range. When $N$ increases sufficiently, $r_m$ can be made smaller so that more files can be cached at the same transmitter ($r_m\to 0$ as $N\to \infty$). Hence, the typical receiver will most likely have the requested files within its range.

\begin{prop}\label{mhcAsufficientcondition}
{\bf A sufficient condition for the $\mhcA$ placement model.} 
The $\mhcA$ performs better than the independent placement model ($\GCP$) \cite{Blaszczyszyn2014} in terms of hit probability if the following condition is satisfied:
\begin{align}
\label{lambdamhcAsufficient}
\txMA(m)\geq
\begin{cases}
\tx \PGstar(m),\quad r_m<\Rdd,\\
\dfrac{1-\exp(-\tx \PGstar(m)\pi {\Rdds})}{\pi\Rdds},\quad r_m\geq\Rdd,
\end{cases}
\end{align}
where $\PGstar(m)$ is the optimal caching distribution for the $\GCP$.
\end{prop}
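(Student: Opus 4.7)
\emph{Proof proposal.} The plan is a direct term-by-term comparison of the hit probability expressions (\ref{MHCaModelHitProbability}) and (\ref{IndependentHitProbability}). Since both sums are over the file index $m$ with the same nonnegative weights $p_r(m)$, it suffices to show that, for each $m$, the $m$-th summand of $\PhitMA$ dominates the $m$-th summand of $\PhitG$; summing over $m$ then yields $\PhitMA \geq \PhitG$ and hence the claim.

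Fix $m\in\{1,\ldots,M\}$. The $\GCP$ summand is $p_r(m)[1-\exp(-\tx \PGstar(m)\pi \Rdds)]$, while the $\mhcA$ summand is $p_r(m)\mathbb{P}(\tilde C_m>0 \vert r_m)$. I would split into the two regimes appearing in (\ref{distributionCm}):
\begin{itemize}
\item If $r_m<\Rdd$, Proposition \ref{AvgHitProbmhcA} gives $\mathbb{P}(\tilde C_m>0\vert r_m)\geq 1-\exp(-\txMA(m)\pi\Rdds)$. Since $1-e^{-x}$ is monotonically increasing in $x\geq 0$, the desired inequality $\mathbb{P}(\tilde C_m>0\vert r_m)\geq 1-\exp(-\tx\PGstar(m)\pi\Rdds)$ follows immediately from the hypothesis $\txMA(m)\geq \tx\PGstar(m)$.
\item If $r_m\geq\Rdd$, then (\ref{distributionCm}) gives the equality $\mathbb{P}(\tilde C_m>0\vert r_m)=\txMA(m)\pi\Rdds$, so the required bound reduces to $\txMA(m)\pi\Rdds \geq 1-\exp(-\tx\PGstar(m)\pi\Rdds)$, which is exactly the second branch of the hypothesis (\ref{lambdamhcAsufficient}) after rearrangement.
\end{itemize}
In either regime the per-file inequality holds, so multiplying by $p_r(m)\geq 0$ and summing preserves the order, yielding $\PhitMA\geq \PhitG$.

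The argument is essentially a monotonicity / algebraic manipulation once the bounds (\ref{distributionCm}) are in hand; there is no real obstacle. The only subtlety worth flagging is that the two branches of (\ref{lambdamhcAsufficient}) are genuinely different because in the $r_m<\Rdd$ regime we only have a lower bound on $\mathbb{P}(\tilde C_m>0\vert r_m)$ (due to pairwise repulsion making the conditional thinning sub-Poisson), whereas in the $r_m\geq \Rdd$ regime at most one cache in $B_0(\Rdd)$ stores file $m$, which converts the hit probability into a linear function of $\txMA(m)$ and forces the weaker RHS $[1-\exp(-\tx\PGstar(m)\pi\Rdds)]/(\pi\Rdds)$. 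I would conclude by noting that the sufficient condition is tight in spirit: using the refined bound in (\ref{PhitMbounds}) (Proposition \ref{cachehitprobMatern}) the inequality can be sharpened by the pairwise correlation integral, and the optimized $\txMAstar(m)$ from Theorem \ref{HCP} typically satisfies (\ref{lambdamhcAsufficient}) because $\mhcA$ is a negatively dependent placement (Propositions \ref{neg-placement} and \ref{MHCnegdep}).
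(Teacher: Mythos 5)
Your proposal is correct and takes essentially the same route as the paper: a per-file sufficient condition $\mathbb{P}(\tilde{C}_m\geq 1\vert r_m)\geq 1-\exp(-\tx \PGstar(m)\pi\Rdds)$, verified in the $r_m<\Rdd$ branch via the lower bound (\ref{probhavingoneTXsmallrm}) and monotonicity of $1-e^{-x}$, and in the $r_m\geq\Rdd$ branch via the linear expression (\ref{probhavingoneTX}), then weighted by $p_r(m)$ and summed. The only piece of the paper's argument you omit is the closing feasibility remark that the condition (\ref{lambdamhcAsufficient}) is compatible with the cache constraint $\sum_{m=1}^M \txMA(m)\leq N\tx$ (shown using $1-e^{-x}\leq x$), which establishes non-vacuousness rather than the sufficiency claim itself.
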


\begin{proof}
See Appendix \ref{App:Appendix-mhcAsufficientcondition}.
\end{proof}

In the regime where $r_m$ is chosen to satisfy the inequality in (\ref{lambdamhcAsufficient}), for all $m$, the $\mhcA$ placement model performs better than independent placement, and the volume fraction occupied by the transmitters caching file $m$, i.e., the proportion of space covered by the union $\bigcup_{x_i\in\Phi_M}{(x_i+B_0(\Rdd))}$ pertaining to file $m$, is lower bounded by $\frac{\txMA(m)}{\tx} \geq \frac{1-e^{-\tx\PGstar(m) \pi \Rdds}}{\tx}$. When the selection of $\txMA(m)$ does not satisfy (\ref{lambdamhcAsufficient}), the volume fraction pertaining to the caches storing file $m$ is upper bounded by $\frac{\txMA(m)}{\tx}<\PGstar(m)$.

From (\ref{lambdamhcAsufficient}), the density parameter $\txMA(m)$ decreases with $\Rdd$, hence, the exclusion radius $r_m$ increases with $\Rdd$, which is intuitive because as the number of transmitters within the communication range increases a smaller fraction of them should cache the desired content. The exclusion radius decreases with popularity, i.e., $r_m$ decreases as $p_r(m)$ increases. It also decreases with $\tx$ and the cache size $N$. 

We consider two regimes of caching controlled by the cache size $N$, which determines the optimal cache placement solutions for the independent and $\mhcA$ placement models. The spatial diversity of the content is captured by the optimal placement distribution for given $N$. As $N$ increases, content diversity per cache increases and less spatial diversity is required. Therefore, when $N$ is sufficiently large, independent placement is better than $\mhcA$ placement. For the $\mhcA$ placement model, the exclusion radii decrease with the file popularity. However, for small $N$, a higher exclusion radii are required for all files, which will increase the spatial diversity. Therefore, in the regime where $N$ is small, for sufficiently large $\Rdd$, $\mhcA$ placement performs better than independent placement ($\GCP$).

We next detail another $\mhc$-based model called $\mhcB$ and provide sufficient conditions for achieving a higher cache hit probability than the $\GCP$ model of \cite{Blaszczyszyn2014}.

\subsection{Hard-Core Placement Model II ($\mhcB$)} 
\label{hardcore2}
In this section, we propose a new $\mhc$-inspired placement model called $\mhcB$. We seek a spatially correlated content caching model that improves the performance of the independent placement model of Sect. \ref{cachemodel-independent} based on the $\GCP$ problem in \cite{Blaszczyszyn2014} using the same marginal caching probabilities, i.e., on average the fraction of the users containing a file is equal to its optimal placement probability of the $\GCP$ model. 

Different from the $\mhcA$ model in Sect. \ref{hardcore1}, where we maximize the average cache hit probability given the finite cache storage constraint, in this section we optimize the exclusion radii using the caching distribution in (\ref{pcoptimal}) of the $\GCP$ model in Theorem \ref{GCP}, and provide sufficient conditions so that the $\mhcB$ model is at least as good as the $\GCP$ scheme of \cite{Blaszczyszyn2014}.

\begin{figure*}
\centering
\includegraphics[width=0.9\textwidth]{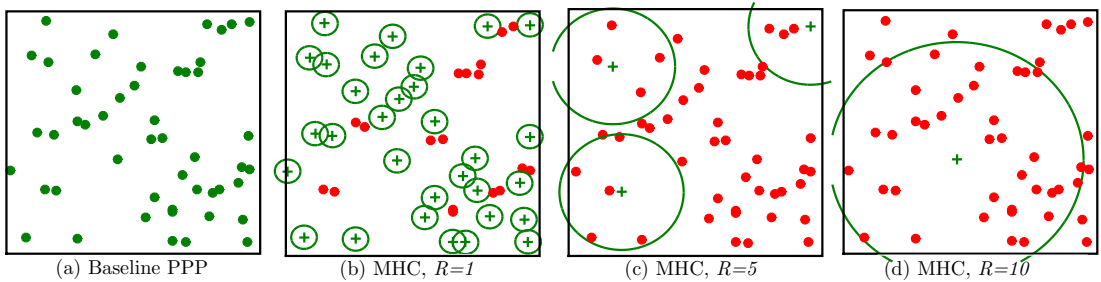}
\caption{\small{$\mhc$ versus the exclusion radii. Each node is associated a uniformly distributed mark $U[0,1]$ independently. Node $x_i\in\phi$ is selected if it has the lowest mark in $B_i(R)$. Selected nodes are denoted by plus sign. (a) Begin with a realization of $\PPP$, $\phi$. Set of selected points for a given realization of the $\PPP$ for an exclusion radius of (b) $R=1$, (c) $R=5$ and (d) $R=10$. As $R$ increases, the intensity of retained nodes decreases.}}
\label{mhc2vsexclusionradii}
\end{figure*}

The critical exclusion radius should be inversely proportional to the popularity of the requests, which is mainly determined by the skewness parameter $\gamma_r$. As $\gamma_r$ increases, the distribution becomes more skewed and higher variability is observed in the exclusion radii of different files.

In Fig. \ref{mhc2vsexclusionradii}, we illustrate the trend of the $\mhc$ process for different exclusion radii. As the exclusion radius $R$ increases, the intensity $\lambda_{\mhc}$ of $\mhcB$ process decreases.

\begin{prop}
\label{MHCAvsIndependent}
The exclusion radius for content $m$ for the $\mhcB$ model is given as 
\begin{align}
\label{mhcBexclusionradii}
r_m^B = \sqrt{\frac{1}{\lambda_t\pi}W\Big(-\frac{\exp(-1/\PGstar(m))}{\PGstar(m)}\Big)+\frac{1}{\lambda_t \pi \PGstar(m)}},\quad n\in\mathbb{Z},
\end{align}
where $\PGstar(\cdot)$ is the optimal caching distribution for $\GCP$ and $W$ is the Lambert  function.  
\end{prop}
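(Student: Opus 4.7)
The plan is to equate the marginal caching probability of the $\mhcB$ model with the optimal $\GCP$ marginal, which is the defining property of $\mhcB$ (same marginals as $\GCP$, but spatially correlated), and then invert the resulting transcendental equation using the Lambert $W$ function.

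First I would recall from \eqref{cacheprobmatern} that the caching probability of file $m$ at a typical $\DD$ transmitter under the hard-core construction is
\begin{align*}
\PMB(m) \;=\; \frac{1-\exp(-\bar{C}_m)}{\bar{C}_m}, \qquad \bar{C}_m = \lambda_t \pi (r_m^B)^2.
\end{align*}
By definition of the $\mhcB$ model, I enforce $\PMB(m)=\PGstar(m)$ so that on average the fraction of users storing file $m$ matches the $\GCP$ marginal. Setting $x=\lambda_t\pi (r_m^B)^2$ and $p=\PGstar(m)$, the equation becomes
\begin{align*}
1-e^{-x} \;=\; p\,x, \qquad \text{i.e.,} \qquad e^{-x} \;=\; 1-p\,x.
\end{align*}

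Next I would convert this to the canonical Lambert form $u e^{u} = z$. The natural substitution is $u=-(1-px)/p$, so that $1-px=-pu$ and $x=(1+u\cdot 0)\cdots$; concretely, multiplying $e^{-x}=1-px$ by $e^{-1/p+x}/(-p)$ and rearranging gives
\begin{align*}
u\,e^{u} \;=\; -\frac{e^{-1/p}}{p}, \qquad u \;=\; -\frac{1-p x}{p}.
\end{align*}
Applying $W$ yields $u=W\!\bigl(-e^{-1/p}/p\bigr)$, and solving back for $x$ gives
\begin{align*}
\lambda_t\pi (r_m^B)^2 \;=\; x \;=\; \frac{1}{p} + W\!\left(-\frac{e^{-1/p}}{p}\right).
\end{align*}
Dividing by $\lambda_t\pi$ and substituting $p=\PGstar(m)$ recovers the claimed expression for $r_m^B$.

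The main subtlety is not the algebra but the branch selection for $W$: the argument $-e^{-1/p}/p$ lies in $[-1/e,0)$ whenever $p\in(0,1]$, which is precisely the domain on which both real branches $W_0$ and $W_{-1}$ are defined. I would argue that the principal branch $W_0$ is the correct one to take, because the map $x\mapsto (1-e^{-x})/x$ is monotonically decreasing from $1$ to $0$ on $(0,\infty)$, so there is a unique positive $x$ solving $1-e^{-x}=px$ for each $p\in(0,1]$, and that unique solution corresponds to the $W_0$ branch (the $W_{-1}$ branch would return the spurious root $x=0$). Finally I would note a couple of sanity checks: as $p\to 1$, the argument tends to $-e^{-1}$ and $W_0(-e^{-1})=-1$, giving $r_m^B\to 0$ (the file is cached everywhere, no exclusion needed); and as $p\to 0$, $r_m^B\to\infty$, consistent with the monotonic relationship between exclusion radius and popularity already established in Section \ref{hardcore1}.
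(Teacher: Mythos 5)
Your proof is correct and follows essentially the same route as the paper: the $\mhcB$ exclusion radius is defined by equating the Mat\'{e}rn marginal $\frac{1-e^{-\bar{C}_m}}{\bar{C}_m}$ (with $\bar{C}_m=\lambda_t\pi (r_m^B)^2$) to $\PGstar(m)$ and inverting, which is exactly what the paper does. Your explicit Lambert-$W$ reduction and the argument that the principal branch $W_0$ picks out the unique positive root (while $W_{-1}$ gives the spurious root $x=0$) supply details the paper leaves implicit, but they do not change the approach.
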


\begin{proof}
See Appendix \ref{App:Appendix-MHCAvsIndependent}.
\end{proof}
From Prop. \ref{MHCAvsIndependent}, given the same marginal caching distributions for the $\GCP$ and the $\mhcA$ models, the relation (\ref{mhcBexclusionradii}) guarantees the $\mhcA$ model to outperform the independent content placement model in terms of the average cache hit rate performance.

Using the second order properties of the hard-core models, the variance of the $\HCP$ model 
is approximated by 
$\rm{Var}_{\mhcA}
\simeq \txMA+2\pi \int\nolimits_{0}^{\infty}\left(\rho^{(2)}(r)-\txMAs\right)r{\rm d}r$ \cite[Ch. 4.5]{Stoyan1996}. 
Hence, using (\ref{SOPD}) the variance of the $\mhc$ model for file $m$ can be approximated as
\begin{align}
\rm{Var}_{\mhcA}(m)\simeq 
\txMA(m)-4\txMA(m)[1-\exp(-\tx \pi r_m^2)]+2\pi \int\nolimits_{r_m}^{2r_m}\rho_m^{(2)}(r)r{\rm d}r.
\end{align}

Note that $r_m$ decreases, and $\txMA(m)$ and $\rho_m^{(2)}(r)$ increase with popularity. Therefore, we can observe that there is a higher variability for popular files, which means that popular files are placed more randomly than unpopular files, and for unpopular files the placement distribution becomes more regular. This implies that randomized caching is in fact good for popular files, and more deterministic placement techniques are required for unpopular files.

\section{Numerical Comparison of Different Content Placement Models}
\label{comparisonindependentmhc}
We showed that the $\HCP$ techniques detailed in Sect. \ref{hardcore} yield negatively correlated placement, and can provide a higher cache hit than independent placement ($\GCP$). In this section, we verify our analytical expressions and provide a performance comparison between the $\GCP$ of \cite{Blaszczyszyn2014}, summarized in Sect. \ref{cachemodel-independent}, and the $\HCP$ of Sect. \ref{hardcore} by contrasting the average cache hit rates, as discussed in Sect. \ref{hitprobability}. For tractability, in our simulations we assume $M=2$ and $N=1$. The $\DD$ nodes form realizations of a PPP $\Phi$ over the region $[-10,10]^2$ with an intensity $\lambda_t$ per unit area. We assume there is a typical receiver at the origin which samples a request from the distribution satisfying $p_r(1)=2/3$ and $p_r(2)=1/3$. To compute the average cache hit probability performance of different models, we run $10^5$ iterations, where at each iteration, we consider a realization $\phi$ of PPP $\Phi$.

{\bf Cache hit rate with respect to $\lambda_t$.} 
We illustrate the cache hit probability trends of the $\MPC$ policy, the $\GCP$ model in \cite{Blaszczyszyn2014}, and the $\mhcA$ and $\mhcB$ placement models together with the bounds for the $\mhcA$ model with respect to the intensity $\lambda_t$ for $\Rdd=10$ in Fig. \ref{GCPvsHCPfordifferentlambdat}. It has already been numerically demonstrated in Fig. 3 of \cite{Blaszczyszyn2014} that the hit probability of $\GCP$ outperforms $\MPC$ policy, especially for low SINR thresholds, corresponding to large $\Rdd$ values. Therefore, we use $\GCP$ as benchmark for the comparison. The lower and upper bounds for the hit probability of the $\mhcA$ placement in (\ref{PhitMbounds}) of Prop. \ref{cachehitprobMatern} is also shown. Compared to the $\GCP$ model in \cite{Blaszczyszyn2014}, the $\mhcA$ and $\mhcB$ placement models provide higher cache hit probabilities, which we demonstrate next. From Fig. \ref{GCPvsHCPfordifferentlambdat}, we observe that the average cache hit probability for all cases improves with $\lambda_t$, $\GCP$ improves with increasing $\lambda_t$, and the performance gap between the $\HCP$ models and the $\GCP$ is higher at high $\lambda_t$. The respective cache hit gains of the $\mhcB$ and $\mhcA$ models over $\GCP$ can be up to $30\%$ and $37\%$, and the gain of $\mhcA$ over $\MPC$ is $50\%$ for this particular example. 
\begin{figure*}
\centering
\includegraphics[width=0.7\textwidth]{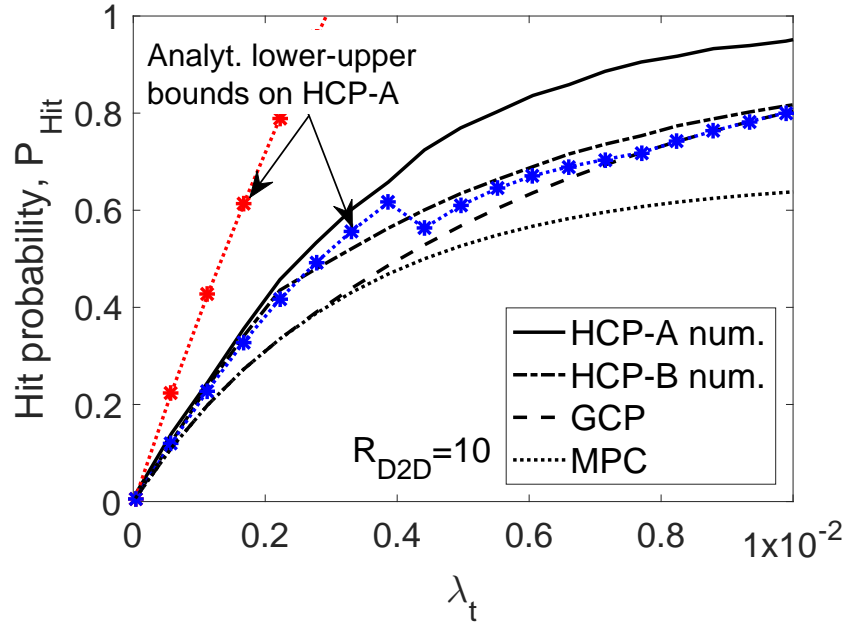}
\caption{\small{Maximum cache hit probabilities of the $\MPC$, $\GCP$ and $\HCP$ model for varying $\DD$ node intensity $\lambda_t$. }}
\label{GCPvsHCPfordifferentlambdat}
\end{figure*} 

{\bf Cache hit rate with respect to $\Rdd$.} 
The numerical comparison for the $\GCP$ and the $\mhcA$ models for varying $\Rdd$ and fixed $\lambda_t$ in Example \ref{ExGCPHCP} is tabulated in Table \ref{table:tabExample1}. Now, we illustrate the dependence of the average cache hit probability of different cache placement models on the communication radius $\Rdd$ in Fig. \ref{GCPvsHCPfordifferentRdd}. The lower and upper bounds for the hit probability of the $\mhcA$ placement in (\ref{PhitMbounds}) of Prop. \ref{cachehitprobMatern} is also shown. For high $\Rdd$, both models perform similarly. However, when $\Rdd$ is small, $\HCP$ performs better because it exploits the spatial diversity of the $\DD$ caches. For small $\Rdd$, feasible for the $\DD$ regime, $\mhc$-inspired approaches are a better alternative\footnote{One disadvantage of the $\mhcB$ model is that the excluded files' cache space is not reused, which can be resolved by jointly assigning marks. Therefore, we need to vectorize the marks to jointly determine the set of cached files and to avoid the problems caused by cache underutilization or overuse. The calculation of the cache underutilization or the overuse probability is left as future work.}.
\begin{figure*}
\centering
\includegraphics[width=0.7\textwidth]{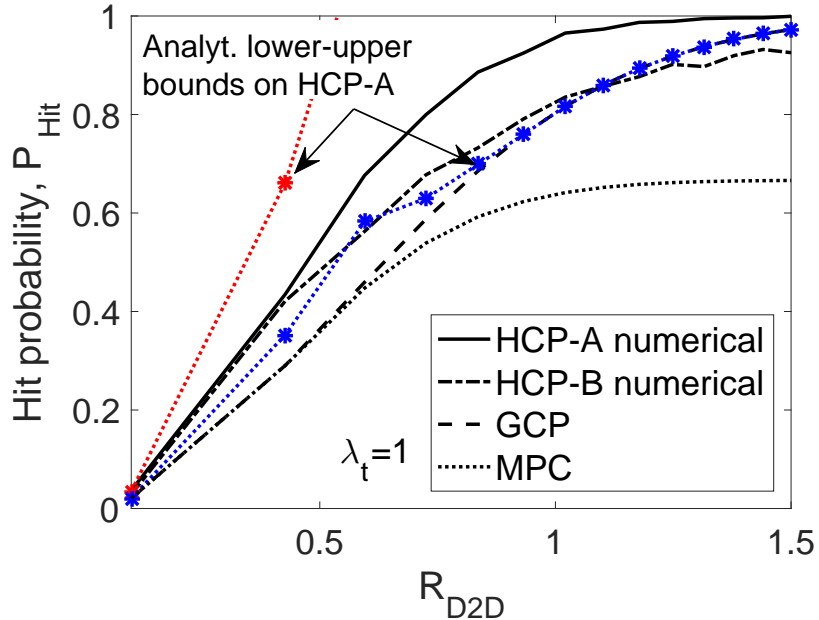}
\caption{\small{Maximum cache hit probabilities of the $\MPC$, $\GCP$ and $\HCP$ models for varying communication radius. }}
\label{GCPvsHCPfordifferentRdd}
\end{figure*}

{\bf Cache utilization ratio.} As discussed in Proposition \ref{mhcAunderutilization}, the $\HCP$ placement model causes underutilization of the caches. We numerically investigate the cache utilization ratio for the $\mhcA$ sufficient condition given in Prop. \ref{mhcAsufficientcondition}, which is shown in Fig. \ref{UtilizationRatio}. As $\Rdd$ increases, the utilization drops because there will be more $\DD$ caches around the typical receiver and hence, the required number of cache slots decreases. For small $\lambda_t$, the values taken by $\txMA(m)$ are small that yields a low utilization ratio when $\Rdd$ is large, which follows from (\ref{lambdamhcAsufficient}). However, the utilization can be improved by jointly determining the values of $\txMA(m)$ and $\Rdd$.
\begin{figure*}[t!]
\centering
\center\includegraphics[width=0.6\textwidth]{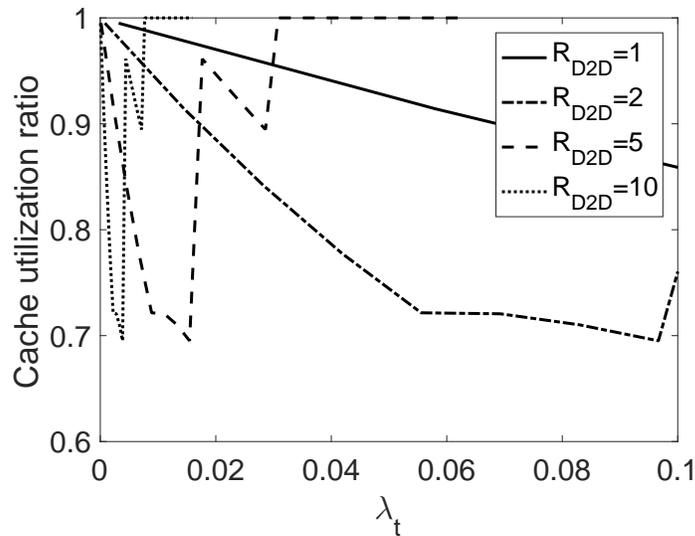}
\caption{\small{The cache underutilization (follows from the sufficient condition in Prop. \ref{mhcAsufficientcondition}).}\label{UtilizationRatio}}
\end{figure*}

{\bf Cache size.} The performance of the independent and the $\HCP$ models is mainly determined by the cache size. Hence, the analysis boils down to finding the critical cache size that determines which model outperforms the other in terms of the hit probability under or above the critical size. In Fig. \ref{exclusionradiiforMHCB}, we show the trend of the optimal exclusion radius $r_m$ of the $\mhcB$ model with respect to the caching pmf $\PX(m)$. As we expect from (\ref{lambdamhcAsufficient}), the exclusion radius $r_m$ decays with the popularity and the cache size $N$. Note that the $\HCP$ model compensates the small cache size at the cost of communication radius.
\begin{figure*}[t!]
\centering
\center\includegraphics[width=0.6\textwidth]{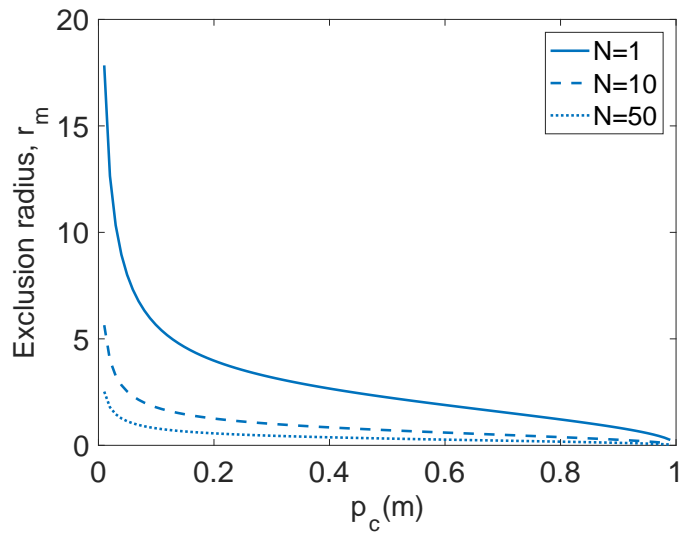}
\caption{\small{Characterization of the exclusion radii of $\mhcB$ for $N=[1, 10, 50]$ and 
$\Rdd=1$ as a function of $p_c(m)$.}\label{exclusionradiiforMHCB}}
\end{figure*}

{\bf Refinement to soft-core models.} 
The thinning leading to the $\mhc$ process can be refined such that higher intensities $\txMA$ are possible \cite[Ch. 5.4]{Stoyan1996}, at the price of more complicated algorithms \cite{Moller2010}  and \cite{Horig2012}. For refinement of the hard-core models, models based on Gibbs point processes ($\GPP$s) with repulsive potentials can be developed to generate soft-core\footnote{In the case of a soft-core point process, thinning is stronger the closer point pairs of the initial $\PPP$ are, but any pair distance still has non-vanishing probability.} placement models \cite[Ch. 18]{BaccelliBook1}. The study of soft-core models inspired from $\GPP$s, and the maximum caching gain due to the spreading of content in geographic settings is left as future work.

\section{Conclusions}
\label{conc} 
We proposed spatially correlated content caching models to maximize the hit probability by incorporating hard-core strategies that capture the pairwise interactions to enable spatial diversity.

{\em Our findings on spatial content caching suggest that the following design insights should enable more efficient caching models for $\DD$-enabled wireless networks:} 

{\bf Repulsive cache placement.} Negatively correlated content placement rather than independent placement is required to maximize the cache hit probability. Due to the isotropy of the $\PPP$ process, we contemplate a rotation invariant caching model. To satisfy negative spatial correlation, geographical separation of the content within the neighborhood of a typical receiver is required. Thus, in caching protocol design, it is important to incorporate an exclusion region around each cache, such that nodes in this region are not allowed to cache simultaneously. We show that high cache hit rates in a $\PPP$ network can be achieved through a $\mhc$-inspired placement model. 

{\bf Towards soft-core placement models.} We analyzed the $\HCP$ model, where the exclusions are determined by the hard-core radii. Future studies include more general solutions inspired from the $\GPP$ or Ising models capturing the pairwise interactions using soft-core potentials. The shape and scale of the potential 
should be determined accordingly. The pairwise potential function is promising because it can characterize the spatial and temporal dynamics of the file popularities at different geographic locations adaptively. Hence, the soft-core placement incorporating pairwise correlations can be exploited to improve the cache hit rate. This can can pave the way for the development of spatial cache placement and eviction policies to decide what content to discard, when to discard the content and where (to which neighbor) to relay the content, and provide practical design insights into how to adapt to geographical and temporal changes without compromising the accuracy. 

Possible extensions also include hierarchical models for content delivery \cite{Che2002}, multi-hop routing to improve the hit probability, distributed scheduling and content caching with bursty arrivals and delay constraints, and smoothing the cellular traffic by minimizing the peak-to-average traffic ratio with $\DD$ transmissions. 

\begin{appendix}
\section{Appendices}\label{Appendices}
\subsection{Proof of Proposition \ref{neg-placement}} 
\label{App:Appendix-neg-placement}
For a negatively dependent identical content placement, we can infer that $\PmissN(m,k)\stackrel{(a)}{\leq}\prod_{i=1}^k \mathbb{P}(Y_{(m,i)}=0)\stackrel{(b)}{=}\mathbb{P}(Y_m=0)^k$, where $(a)$ comes from Defn. \ref{NegativeDependence}, and $(b)$ is from identical content placement assumption. Hence, for a negatively dependent content placement strategy, the hit probability satisfies $\PhitN=1-\sum\nolimits_{m=1}^M{p_r(m)\sum\nolimits_{k=0}^{\infty}{\mathbb{P}(\mathcal{N}_P=k)\PmissN(m,k)}}\overset{(a)}{\geq} 1-\sum\nolimits_{m=1}^M{p_r(m)\sum\nolimits_{k=0}^{\infty}{\mathbb{P}(\mathcal{N}_P=k)\mathbb{P}(Y_m=0)^k}}$, where the $\RHS$ of $(a)$ is the hit probability for independent placement for $\PI(m)=1-\mathbb{P}(Y_m=0)$.

\subsection{Proof of Proposition \ref{MHCnegdep}}
\label{App:Appendix-MHCnegdep}
Dropping the file index $m$, let $Y_{i}$ be the indicator random variable that takes the value $1$ if file $m$ is available in the cache located at $x_i\in\phi$ and $0$ otherwise. Given the typical node has $k$ neighbors within its exclusion radius, from (\ref{lambdamhcA}), the probability that a node $x\in\Phi$ is selected, i.e., has the lowest mark among all the points in $B_x(r)$, to cache the file is $\mathbb{P}(Y_i=1)=1/(k+1)$. For $k>1$, $\mathbb{P}\Big(\bigcap\nolimits_{i=1}^k {Y_i=0}\Big)=\mathbb{P}\Big(\bigcap\nolimits_{i=1}^k {Y_i=1}\Big)=0$ since the probability that all nodes are assigned the same mark values is $0$. Therefore, the following relations in Definition \ref{NegativeDependence} hold:
\begin{align}
\mathbb{P}\Big(\bigcap\nolimits_{i=1}^k {Y_i=0}\Big) < \prod\nolimits_{i=1}^k{\mathbb{P}(Y_i= 0)},\quad
\mathbb{P}\Big(\bigcap\nolimits_{i=1}^k {Y_i=1}\Big) <\prod\nolimits_{i=1}^k{\mathbb{P}(Y_i=1)},\nonumber
\end{align}
and the $\mhc$ placement satisfies the negative dependence condition in Proposition \ref{neg-placement}.

\subsection{Proof of Proposition \ref{AvgHitProbmhcA}}
\label{App:Appendix-AvgHitProbmhcA}
We first consider the case $r_m\geq\Rdd$, where the user can be covered by at most one transmitter that has file $m$. The probability that the user is covered is given by the probability that there exists a transmitter of the $\mhcA$ process of file $m$ at the origin as determined by \cite[Ch. 2.1]{BaccelliBook1}  
\begin{eqnarray}
\label{probhavingoneTX}
\mathbb{P}(\tilde{C}_m=1\vert r_m\geq\Rdd)= \mathbb{E}[\tilde{C}_m \vert r_m\geq\Rdd] 
= \txMA(m)\pi \Rdds = [1-e^{-\bar{C}_m}] \Big(\frac{\Rdd}{r_m}\Big)^2.
\end{eqnarray} 
For the case where $r_m<\Rdd$, we can estimate $\mathbb{P}(\tilde{C}_m\geq 1\vert r_m<\Rdd)$ using the second-order product density of the MHC model. However, we use a simpler approximation for tractability. The probability that a transmitter is eliminated in the $\mhcA$ with exclusion radius $r_m$ is equal to $1-\frac{\txMA(m)}{\tx}$. For the case of $r_m<\Rdd$, let the number of points in $B(r_m)$ from the original $\PPP$ satisfy $\Phi(B_0(\Rdd))=k$. Since $\mhcA$ is negatively correlated, from Definition \ref{NegativeDependence}, we can exploit the $\PPP$ approximation for the $\mhc$ in \cite{Ibrahim2013} to calculate the following upper bound for the probability that $k$ points are eliminated in $\mhcA$:
\begin{eqnarray}
\label{kpointseliminate}
\mathbb{P}(k\,\,\text{points are eliminated in $\mhcA$ $\Phi_M$ with}\, r_m\vert \mathcal{N}_P=k)\leq\Big(1-\frac{\txMA(m)}{\tx}\Big)^k.
\end{eqnarray}
Using (\ref{kpointseliminate}), the void probability of the $\mhcA$ is approximated as
\begin{eqnarray}
\label{probhavingoneTXsmallrm}
\mathbb{P}(\tilde{C}_m=0\vert r_m<\Rdd)
\leq\sum\limits_{k=0}^{\infty}\mathbb{P}(\mathcal{N}_P=k)\Big(1-\frac{\txMA(m)}{\tx}\Big)^k
= e^{-\txMA(m)\pi {\Rdds}}.
\end{eqnarray}
The relations (\ref{probhavingoneTX}) and (\ref{probhavingoneTXsmallrm}) yield the final result.

\subsection{Proof of Theorem \ref{HCP}} 
\label{App:Appendix-mhcAhitmaximum}
Define the Lagrangian to find the solution of (\ref{eq:hitprob-matern})  as follows: 
\begin{multline}
\mathcal{M}(\zeta)=\sum\limits_{m=1}^M{p_r(m)\mathbb{P}(\tilde{C}_m>0\vert r_m)}+\zeta \Big(\sum\limits_{m=1}^M{\frac{\txMA(m)}{\tx}-N}\Big)\nonumber\\
\stackrel{(a)}{\approx} \sum\limits_{m=1}^{\mc}{p_r(m)\big[1-e^{-\txMA(m)\pi {\Rdds}}\big]}+\sum\limits_{m=\mc+1}^{M}{p_r(m)\txMA(m)\pi\Rdds}+\zeta \Big(\sum\limits_{m=1}^M{\frac{\txMA(m)}{\tx}}-N\Big),\nonumber
\end{multline}
where $\mc=\argmax\nolimits_{m\in\{1,\cdots,M\}} \{r_m\vert r_m<\Rdd\}$, and $(a)$ follows from the void probability of the $\mhcA$ for $r_m<\Rdd$ given in (\ref{probhavingoneTXsmallrm}), and the probability that the user is covered for $r_m\geq\Rdd$ as given in (\ref{probhavingoneTX}). Taking its derivative with respect to $\txMA(m)$,
\begin{align}
\frac{d \mathcal{M}(\zeta)}{d \txMA(m)}=
\begin{cases}
p_r(m)\pi\Rdds e^{-\txMA(m)\pi {\Rdds}}+\zeta\frac{\txMA(m)}{\tx},\quad m\leq \mc \\
p_r(m)\pi\Rdds+\zeta\frac{\txMA(m)}{\tx},\quad m>\mc
\end{cases}\nonumber
\end{align}
Evaluating this at $\left.\frac{d \mathcal{M}(\zeta)}{d \txMA(m)}\right\vert _{\zeta=\zeta^*} =0$, we obtain 
\begin{align}
\zeta^*=\begin{cases}
-p_r(m)\frac{\tx\pi\Rdds}{\txMA(m)}e^{-\txMA(m)\pi {\Rdds}},\quad m\leq \mc \\
-p_r(m)\frac{\tx\pi\Rdds}{\txMA(m)},\quad m>\mc
\end{cases}.\nonumber
\end{align}
Note that the optimal solution $\zeta^*$ is increasing in the optimal value of $\txMA(m)$, i.e., $\txMAstar(m)$. To satisfy this relation, $\txMAstar(m)$ has to satisfy $\txMAstar(m)e^{\txMAstar(m)}= cp_r(m)$ for $m\leq \mc$ and $\txMAstar(m)= cp_r(m)$ for $m>\mc$ for a constant $c$. Using the constraint in (\ref{eq:hitprob-matern}), we obtain the relation (\ref{lambdaoptforMatern}) that determines the value of $c$.

\subsection{Proof of Proposition \ref{cachehitprobMatern}}
\label{App:Appendix-cachehitprobMatern}
Incorporating the pdf of the $\mhc$ point process with exactly $k$ points given in (\ref{ffunctionMHC}) into the miss probability of the $\mhcA$ model in (\ref{missMHC}), we derive the cache miss probability for the $\mhcA$ model, i.e., the probability that $k$ caches cover a receiver, and none has file $m$, as follows
\begin{align}
\PmissMA(m,k)
\begin{cases}
\leq {\big(\int\nolimits_{r_m}^D{{\rm d}x}\big)^k}\Big/{\big(\int\nolimits_{0}^D{{\rm d}x}\big)^k}\stackrel{(a)}{=}\Big(1-\frac{r_m^2}{D^2}\Big)^k,\quad r_m<\Rdd\\
=\dfrac{\int\nolimits_0^1{(1-\exp{(-\bar{C}_m t})){\rm d}t}}{\bar{C}_m^{-1}}\big(\frac{\Rdd}{r_m}\big)^2=\exp(-\bar{C}_m)\big(\frac{\Rdd}{r_m}\big)^2,\quad r_m\geq \Rdd
\end{cases},\nonumber
\end{align}
where $\mathcal{V}^k$ characterizes the cache miss region given $k$ nodes, and $(a)$ follows from converting the integral into polar coordinates.

Since $\PmissMA(m,k)$ is related to $\PhitMA$ through the $\PhitX$ expression given in (\ref{PhitX}), the lower bound $\PhitMALB$ on the maximum cache hit probability for the $\mhcA$ model is given by the final expression in (\ref{PhitMbounds}), which gives the solution of the the $\mhcA$ hit probability maximization formulation in (\ref{eq:hitprob-matern}) given the $r_m$ values are optimized using the relation (\ref{lambdaoptforMatern}) of Theorem \ref{HCP}. Similarly, the upper bound $\PhitMAUB$ can be found using (\ref{probhavingoneTXsmallrmbound}). 

\subsection{Proof of Proposition \ref{mhcAunderutilization}} 
\label{App:Appendix-mhcAunderutilization}
In the $\mhcA$ model with exclusion radius $r_m$, from (\ref{lambdamhcA}), the average number of nodes in $B_0(r_m)$ is given by $1-e^{-\bar{C}_m}$. For popular files with $r_m<\Rdd$, the maximum number of non-overlapping circles with radius $r_m$ that can fit inside $B_0(\Rdd)$ is upper bounded by $\Big(\frac{\Rdd}{r_m}\Big)^2$. Hence, the following inequality is satisfied:
\begin{align}
\label{sufficientMHC}
\mathbb{E}[\tilde{C}_m\vert r_m<\Rdd]
\leq [1-e^{-\bar{C}_m}]\Big(\frac{\Rdd}{r_m}\Big)^2.
\end{align}

The average number of transmitters that cache all the files in $B_0(\Rdd)$ is given by
\begin{align}
\sum\limits_{m=1}^M{\mathbb{E}[\tilde{C}_m]}&=\sum\limits_{m=1}^{\mc} \mathbb{E}[\tilde{C}_m\vert r_m<\Rdd]+\sum\limits_{m=\mc+1}^{M} \mathbb{E}[\tilde{C}_m\vert r_m\geq\Rdd] \nonumber\\
&\stackrel{(a)}{\leq} \sum\limits_{m=1}^{\mc} [1-\exp(-\bar{C}_m)] \Big(\frac{\Rdd}{r_m}\Big)^2 + \sum\limits_{m=\mc+1}^{M}\txMA(m)\pi \Rdds\nonumber\\
&=\sum\limits_{m=1}^M{[1-\exp(-\bar{C}_m)] \Big(\frac{\Rdd}{r_m}\Big)^2}\stackrel{(b)}{\leq} N\mathbb{E}[\mathcal{N}_P]=N\tx \pi\Rdds,\nonumber
\end{align}
which results in underutilization of the caches. In the above, the inequality in $(a)$ follows from the inequality (\ref{sufficientMHC}) for $r_m<\Rdd$, and the equality (\ref{probhavingoneTX}) for $r_m\geq \Rdd$, and $(b)$ from scaling the constraint $\sum\nolimits_{m=1}^M{\PMA(m)}\leq N$ of the hit probability maximization formulation in (\ref{eq:hitprob-matern}) with $\tx \pi \Rdds$ and using the relation (\ref{cacheprobmatern}). Thus, the main reason for the underutilization is the popular files with $r_m<\Rdd$. Despite the underutilization of the caches, from (\ref{kpointseliminate}), the void probability of the very popular files will be insignificant. As the skewness of the Zipf distribution increases, $r_m$ for popular $m$ becomes even smaller and in the limit as $p_r(1)$ goes to $1$, the value of $r_1$ converges to $0$. In that case, since the probability of jointly retaining two nodes separated by any distance will be independent of each other, we observe that (\ref{kpointseliminate}) will be satisfied with equality. Therefore, the inequality in $(a)$ above that causes the underutilization in (\ref{cacheunderutilizationequation}) will eventually become equality.

\subsection{Proof of Proposition \ref{mhcAsufficientcondition}} 
\label{App:Appendix-mhcAsufficientcondition}
Using the hit probabilities given in (\ref{eq:hitprob-opt}) and (\ref{eq:hitprob-matern}), respectively for the independent and $\mhcA$ content placements, a necessary condition for the $\mhcA$ to perform better than the optimal independent placement model in \cite{Blaszczyszyn2014} in terms of hit probability is given by
\begin{eqnarray}
\label{mhcbsufficient}
\PhitMA=\sum\limits_{m=1}^M{p_r(m)\mathbb{P}(\tilde{C}_m\geq 1\vert r_m)} \geq \PhitG=\sum\limits_{m=1}^M{p_r(m)[1-\exp(-\tx \PGstar(m)\pi {\Rdds})]}.
\end{eqnarray}
A sufficient condition for (\ref{mhcbsufficient}) to be valid is $\mathbb{P}(\tilde{C}_m\geq1\vert r_m)\geq 1-\exp{(-\tx \PGstar(m) \pi {\Rdds})}$. For files with very high popularity, from (\ref{probhavingoneTXsmallrm}) we have
\begin{eqnarray}
\label{cond1}
\mathbb{P}(\tilde{C}_m\geq 1\vert r_m<\Rdd)\geq 1-\exp(-\txMA(m)\pi {\Rdds})\geq 1-\exp(-\tx \PGstar(m)\pi {\Rdds}).
\end{eqnarray}
For files with very low popularity, $r_m$ tends to be very high, i.e., $r_m\geq\Rdd$, and from (\ref{probhavingoneTX}), 
\begin{align}
\label{cond2}
\mathbb{P}(\tilde{C}_m=1\vert r_m\geq\Rdd)=\txMA(m)\pi \Rdds \geq 1-\exp(-\tx \PGstar(m)\pi {\Rdds}). 
\end{align}
Solving (\ref{cond1}) and (\ref{cond2}), the final result is obtained. 

The following relation is established from (\ref{cond1}) and (\ref{cond2}):
\begin{align}
\label{cond3}
\sum\limits_{m=1}^M{\txMA(m)}&\geq \sum\limits_{m=1}^{\mc}{\tx \PGstar(m)}+\sum\limits_{m=\mc+1}^{M}{\frac{1-\exp(-\tx \PGstar(m)\pi {\Rdds})}{\pi\Rdds}},
\end{align} 
where using $1-e^{-x}\leq x$ for $x\geq 0$, the RHS of (\ref{cond3}) can be shown to satisfy:
\begin{align}
\leq\sum\limits_{m=1}^{\mc}{\tx \PGstar(m)}+\sum\limits_{m=\mc+1}^{M}{\tx \PGstar(m)}=\tx\sum\limits_{m=1}^{M}{\PGstar(m)}=N\tx.\nonumber
\end{align}
For a feasible cache placement strategy, we also require that $\sum\nolimits_{m=1}^M{\txMA(m)}\leq N\tx$. Hence, it is possible to set $\txMA(m)$'s as in (\ref{lambdamhcAsufficient}) and satisfy the feasible placement condition.

\subsection{Proof of Proposition \ref{MHCAvsIndependent}} 
\label{App:Appendix-MHCAvsIndependent}
In order to compute the exclusion radii $r_m^B$ for the $\mhcB$ model, we relate the expression (\ref{cacheprobmatern}) for the marginal caching probability of file $m$ of the $\mhcA$ model, which is also true for the $\mhcB$ model, to the optimal placement probability for the $\GCP$ model in \cite{Blaszczyszyn2014} such that the solution for the exclusion radius $r_m^B=\sqrt{\bar{C}_m/(\tx\pi)}$ satisfies
\begin{align}
\PGstar(m)=\frac{\txMA(m)}{\tx}=\frac{1-\exp(-\bar{C}_m)}{\bar{C}_m}.
\end{align}

For such selection of variables $r_m^B$'s, the hit probability for the $\mhcB$ model satisfies
\begin{align}
\label{PhitMmhcB}
\PhitMB&=\sum\limits_{m=1}^{\mc}{p_r(m)[1-e^{-\txMA(m)\pi {\Rdds}}]}+\sum\limits_{m=\mc+1}^{M}{p_r(m)[1-\exp(-\bar{C}_m)]\Big(\frac{\Rdd}{r_m^B}\Big)^2}\nonumber\\
&=\sum\limits_{m=1}^{\mc}{p_r(m)[1-e^{-\tx \PGstar(m) \pi {\Rdds}}]}+\sum\limits_{m=\mc+1}^{M}{p_r(m)\tx \PGstar(m)\pi {\Rdds}}.
\end{align}

On the other hand, the hit probability for the $\GCP$ model in \cite{Blaszczyszyn2014} satisfies (\ref{mhcbsufficient}). Noting that $x\geq 1-e^{-x}$ for $x\geq 0$, hence from (\ref{IndependentHitProbability}) and (\ref{PhitMmhcB}), we conclude that $\PhitMB\geq \PhitG$.

\end{appendix}


\begin{spacing}{1.32}
\bibliographystyle{IEEEtran}
\bibliography{D2Dreferences}

\begin{thebibliography}{10}
\providecommand{\url}[1]{#1}
\csname url@samestyle\endcsname
\providecommand{\newblock}{\relax}
\providecommand{\bibinfo}[2]{#2}
\providecommand{\BIBentrySTDinterwordspacing}{\spaceskip=0pt\relax}
\providecommand{\BIBentryALTinterwordstretchfactor}{4}
\providecommand{\BIBentryALTinterwordspacing}{\spaceskip=\fontdimen2\font plus
\BIBentryALTinterwordstretchfactor\fontdimen3\font minus
  \fontdimen4\font\relax}
\providecommand{\BIBforeignlanguage}[2]{{%
\expandafter\ifx\csname l@#1\endcsname\relax
\typeout{** WARNING: IEEEtran.bst: No hyphenation pattern has been}%
\typeout{** loaded for the language `#1'. Using the pattern for}%
\typeout{** the default language instead.}%
\else
\language=\csname l@#1\endcsname
\fi
#2}}
\providecommand{\BIBdecl}{\relax}
\BIBdecl

\bibitem{Malak2016}
D.~Malak, M.~Al-Shalash, and J.~G. Andrews, ``Optimizing the spatial content
  caching distribution for device-to-device communications,'' in \emph{Proc.,
  IEEE ISIT}, Barcelona, Spain, Jul. 2016, pp. 280--284.

\bibitem{LinMag2014}
X.~Lin, J.~G. Andrews, A.~Ghosh, and R.~Ratasuk, ``An overview of {3GPP}
  device-to-device proximity services,'' \emph{IEEE Commun. Mag.}, vol.~52,
  no.~4, pp. 40--48, Apr. 2014.

\bibitem{Naderializadeh2014}
N.~Naderializadeh, D.~T. Kao, and A.~S. Avestimehr, ``How to utilize caching to
  improve spectral efficiency in device-to-device wireless networks,'' in
  \emph{Proc., Annu. Allerton Conf.}, Illinois, USA, Oct. 2014.

\bibitem{Blaszczyszyn2014}
B.~B\l{}aszczyszyn and A.~Giovanidis, ``Optimal geographic caching in cellular
  networks,'' in \emph{Proc., IEEE ICC}, UK, Jun. 2015, pp. 3358--3363.

\bibitem{ChenPapKoun2016}
Z.~Chen, N.~Pappas, and M.~Kountouris, ``Probabilistic caching in wireless
  {D2D} networks: Hit optimal vs. throughput optimal,'' \emph{IEEE
  Communications Letters}, vol.~21, no.~3, pp. 584--587, Mar. 2017.

\bibitem{Cisco2017}
``Cisco visual networking index: {Global} mobile data traffic forecast update,
  2016Ð-2021,'' white paper, 2017.

\bibitem{MaddahAli2013Journal}
M.~A. Maddah-Ali and U.~Niesen, ``Fundamental limits of caching,'' \emph{IEEE
  Trans. Inf. Theory}, vol.~60, no.~5, pp. 2856--67, May 2014.

\bibitem{Ji2014}
M.~Ji, G.~Caire, and A.~F. Molisch, ``Fundamental limits of caching in wireless
  {D2D} networks,'' \emph{IEEE Trans. Inf. Theory}, vol.~62, no.~2, pp.
  849--869, Feb. 2016.

\bibitem{Golrezaei2014}
N.~Golrezaei, A.~G. Dimakis, and A.~F. Molisch, ``Scaling behavior for
  device-to-device communications with distributed caching,'' \emph{IEEE Trans.
  Inf. Theory}, vol.~60, no.~7, pp. 4286--4298, Jul. 2014.

\bibitem{JiCaiMol2015}
M.~Ji, G.~Caire, and A.~F. Molisch, ``The throughput-outage tradeoff of
  wireless one-hop caching networks,'' \emph{IEEE Trans. Inf. Theory}, vol.~61,
  no.~12, pp. 6833--6859, Dec. 2015.

\bibitem{Gupta2000TIT}
P.~Gupta and P.~R. Kumar, ``The capacity of wireless networks,'' \emph{IEEE
  Trans. Inf. Theory}, vol.~46, no.~2, pp. 388--404, Mar. 2000.

\bibitem{Niesen2009}
U.~Niesen, P.~Gupta, and D.~Shah, ``On capacity scaling in arbitrary wireless
  networks,'' \emph{IEEE Trans. Inf. Theory}, vol.~55, no.~9, pp. 3959Ð--3982,
  Sep. 2009.

\bibitem{Jeon2015}
S.-W. Jeon, S.-N. Hong, M.~Ji, and G.~Caire, ``Caching in wireless multihop
  device-to-device networks,'' in \emph{Proc., IEEE ICC}, Jun. 2015, pp.
  6732Ð--6737.

\bibitem{NadMadAve2016}
N.~Naderializadeh, M.~A. Maddah-Ali, and A.~S. Avestimehr, ``Fundamental limits
  of cache-aided interference management,'' \emph{IEEE Trans. Inf. Theory},
  vol.~63, no.~5, pp. 3092--107, May 2017.

\bibitem{LiuLau2016tnet}
A.~Liu and V.~K.~N. Lau, ``How much cache is needed to achieve linear capacity
  scaling in backhaul-limited dense wireless networks?'' \emph{IEEE/ACM Trans.
  Netw.}, vol.~25, no.~1, pp. 179--88, Feb. 2017.

\bibitem{Shanmugam2016TIT}
K.~Shanmugam, M.~Ji, A.~M.Tulino, J.~Llorca, and A.~G. Dimakis, ``Finite length
  analysis of caching-aided coded multicasting,'' \emph{IEEE Trans. Inf.
  Theory}, vol.~62, no.~10, pp. 5524--37, Oct. 2016.

\bibitem{Vettigli2015}
G.~Vettigli, M.~Ji, A.~M. Tulino, J.~Llorca, and P.~Festa, ``An efficient coded
  multicasting scheme preserving the multiplicative caching gain,'' in
  \emph{Proc., IEEE Infocom Wkshps}, Apr., 2015, pp. 251--256.

\bibitem{Golrezaei2014TWC}
N.~Golrezaei, P.~Mansourifard, A.~F. Molisch, and A.~G. Dimakis, ``Base-station
  assisted device-to-device communications for high-throughput wireless video
  networks,'' \emph{IEEE Trans. Wireless Comm.}, vol.~13, no.~7, pp.
  3665--3676, Jul. 2014.

\bibitem{Shanmugam2013}
K.~Shanmugam, N.~Golrezaei, A.~G. Dimakis, A.~F. Molisch, and G.~Caire,
  ``{FemtoCaching}: {Wireless} content delivery through distributed caching
  helpers,'' \emph{IEEE Trans. Inf. Theory}, vol.~59, no.~12, pp. 8402--13,
  Dec. 2013.

\bibitem{Malak2016_D2DCaching}
D.~Malak, M.~Al-Shalash, and J.~G. Andrews, ``Optimizing content caching to
  maximize the density of successful receptions in device-to-device
  networking,'' \emph{IEEE Trans. Commun.}, vol.~64, no.~10, pp. 4365--4380,
  Oct. 2016.

\bibitem{EliBartek2017}
J.~Elias and B.~B\l{}aszczyszyn, ``Optimal geographic caching in cellular
  networks with linear content coding,'' \emph{arXiv preprint
  arXiv:1704.08625}, Apr. 2017.

\bibitem{KeeBlaMuh2016}
B.~B\l{}aszczyszyn, P.~Keeler, and P.~Muhlethaler, ``Optimizing spatial
  throughput in device-to-device networks,'' \emph{arXiv preprint
  arXiv:1612.09198}, Dec. 2016.

\bibitem{Afshang2016}
M.~Afshang, H.~S. Dhillon, and P.~H.~J. Chong, ``Modeling and performance
  analysis of clustered device-to-device networks,'' \emph{IEEE Trans. Wireless
  Commun.}, vol.~15, no.~7, pp. 4957--72, Jul. 2016.

\bibitem{Leconte2016}
M.~Leconte, G.~Paschos, L.~Gkatzikis, M.~Draief, S.~Vassilaras, and
  S.~Chouvardas, ``Placing dynamic content in caches with small population,''
  in \emph{Proc., IEEE Infocom}, Apr. 2016.

\bibitem{Che2002}
H.~Che, Y.~Tung, and Z.~Wang, ``Hierarchical web caching systems: {Modeling},
  design and experimental results,'' \emph{IEEE J. Sel. Areas Commun.},
  vol.~20, no.~7, pp. 1305--1314, Sep. 2002.

\bibitem{Giovanidis2016}
A.~Giovanidis and A.~Avranas, ``Spatial multi-{LRU} caching for wireless
  networks with coverage overlaps,'' in \emph{Proc., ACM Sigmetrics/IFIP
  Performance}, Antibes, France, Jun. 2016, pp. 403--405.

\bibitem{Andrews2011}
J.~G. Andrews, F.~Baccelli, and R.~K. Ganti, ``A tractable approach to coverage
  and rate in cellular networks,'' \emph{IEEE Trans. Comm.}, vol.~59, no.~11,
  pp. 3122--3134, Nov. 2011.

\bibitem{Lin2013}
X.~Lin, J.~G. Andrews, and A.~Ghosh, ``Spectrum sharing for device-to-device
  communication in cellular networks,'' \emph{IEEE Trans. Wireless Comm.},
  vol.~13, no.~12, pp. 6727--6740, Dec. 2014.

\bibitem{Traverso2013}
S.~Traverso, M.~Ahmed, M.~Garetto, P.~Giaccone, E.~Leonardi, and S.~Niccolini,
  ``Temporal locality in today's content caching: Why it matters and how to
  model it,'' \emph{ACM SIGCOMM Computer Communication Review}, vol.~43, no.~5,
  pp. 5--12, Nov. 2013.

\bibitem{BaccelliBook1}
F.~Baccelli and B.~B{\l}aszczyszyn, \emph{{Stochastic Geometry and Wireless
  Networks}}.\hskip 1em plus 0.5em minus 0.4em\relax NOW: Found. Trends.
  Network., 2010.

\bibitem{Gerasimov2012}
M.~Gerasimov, V.~Kruglov, and A.~Volodin, ``On negatively associated random
  variables,'' \emph{Lobachevskii Journal of Mathematics}, vol.~33, no.~1, pp.
  47--55, Jan. 2012.

\bibitem{Stoyan1996}
D.~Stoyan, W.~Kendall, and J.~Mecke, \emph{Stochastic Geometry and Its
  Applications}, 2nd~ed.\hskip 1em plus 0.5em minus 0.4em\relax John Wiley and
  Sons, 1996.

\bibitem{Moller2010}
J.~M{\o}ller, M.~L. Huber, and R.~L. Wolpert, ``Perfect simulation and moment
  properties for the {Mat\'{e}rn type III} process,'' \emph{Stoch. Process.
  Appl.}, vol. 120, no.~11, pp. 2142--58, Nov. 2010.

\bibitem{Horig2012}
M.~H\"{o}rig and C.~Redenbach, ``The maximum volume hard subset model for
  {Poisson} processes: simulation aspects,'' \emph{J. Statist. Comput. Simul.},
  vol.~82, no.~1, pp. 107--Ð121, Jan. 2012.

\bibitem{Ibrahim2013}
A.~M. Ibrahim, T.~ElBatt, and A.~El-Keyi, ``Coverage probability analysis for
  wireless networks using repulsive point processes,'' in \emph{Proc., IEEE
  PIMRC}, Sep. 2013, pp. 1002 -- 1007.

\end{thebibliography}
\end{spacing}

\end{document}